\newcommand{\name}{TelApart\xspace}
\newcommand{\sys}{TelApart\xspace}
\newcommand{\anonisp}{ISP-X\xspace}
\newcommand{\Pnmdata}{PNM data\xspace}
\newcommand{\pnmdata}{PNM data\xspace}
\newcommand{\pnmmetrics}{performance metrics\xspace}
\newcommand{\proactive}{batch mode\xspace}
\newcommand{\reactive}{reactive mode\xspace}
\newcommand{\FN}{fNode\xspace}
\newcommand{\FNs}{fNodes\xspace}
\newcommand{\orangedots}{dots\xspace}
\newcommand{\reddot}{square\xspace}
\newcommand{\reddots}{squares\xspace}
\newcommand{\greendots}{triangles\xspace}
\newcommand{\TODO}[1]{\textcolor[HTML]{e41a1c}{{(#1)}}}
\DeclareMathOperator*{\argmin}{arg\,min}
\newtheorem{theorem}{Theorem}
\newcommand{\eg}{{e.g.}\xspace}
\newcommand{\ie}{{i.e.}\xspace}
\newcommand{\ea}{{et al.}\xspace}
\newcommand{\jy}[1]{\textcolor{blue}{[JY: #1]}}
\newcommand{\zzy}[1]{#1\xspace}
\newcommand{\zcut}[1]{}
\def\expandafter\normalsize\expandafter{%
    \normalsize%
    \setlength\abovedisplayskip{0pt}%
    \setlength\belowdisplayskip{0pt}%
    \setlength\abovedisplayshortskip{-8pt}%
    \setlength\belowdisplayshortskip{2pt}%
}
\begin{document}

\title{\name: Differentiating Network Faults from Customer-Premise Faults in Cable Broadband Networks}

\author{Jiyao Hu, Zhenyu Zhou, Xiaowei Yang,~\IEEEmembership{Duke University}
}



\maketitle

\begin{abstract}

Two types of radio frequency (RF) impairments frequently occur in a cable broadband network:  impairments that 
occur inside a cable network and impairments occur at the edge of the broadband network, i.e., in a
subscriber's premise. Differentiating these two types of faults is
important, as different faults require different types of technical
personnel to repair them. Presently, the cable industry lacks publicly
available tools to automatically diagnose the type of fault. In this work, we present \sys, a fault diagnosis system for cable
broadband networks. \sys uses telemetry data collected by the
Proactive Network Maintenance (PNM) infrastructure in cable networks
to effectively differentiate the type of fault.  Integral to \sys's
design is an unsupervised machine learning model that groups cable
devices sharing similar anomalous patterns together. We use metrics
derived from an ISP's customer trouble tickets to programmatically
tune the model's hyper-parameters so that an ISP can deploy \sys in
various conditions without hand-tuning its hyper-parameters. We also
address the data challenge that the telemetry data collected by the PNM
system contain numerous missing,
duplicated, and unaligned data points.  Using real-world data
contributed by a cable ISP, we show that \sys can effectively identify
different types of faults. 
\end{abstract}

\section{Introduction}\label{sec:intro}

%



In September 2020, CNN featured a story about a village in
Wales~\cite{bbc2020news}. For a period of 18 months, the broadband
cable Internet of every household in the village mysteriously crashed
every morning. Technicians repeatedly visited the village and even
replaced cables in the area to no avail. Finally, a team of outside
experts visited the village. After laborious testing, they caught the
culprit: an old TV that turned on every day at the news hour. This
incident signifies the challenges of troubleshooting last-mile
networks.  Cable broadband networks have a hybrid fiber-coaxial (HFC)
architecture. The coaxial segments of a cable network consist of many
components, including amplifiers, cable connectors, and cable
shieldings. These components are exposed to real-world conditions such
as inclement weather, radio frequency (RF) interference, backhoeing,
and wild-animal spoliation. Consequently, failures frequently occur in
those networks at haphazard locations, leading to time-consuming and
error-prone fault diagnosis.



There are two types of common faults that impact the service quality
and availability of cable broadband networks. The first type of fault
is a maintenance issue, where a faulty component lies inside the customer-shared 
network infrastructure.  The second type of fault is a service issue, where a
faulty component lies in a subscriber's premise.  It is important to
distinguish a maintenance issue from a service issue because repairing
each type of fault requires a different type of technician. If a cable
ISP makes a wrong diagnosis, they may send a service technician to a
subscriber's home for a maintenance issue or vice versa. In such
cases, the technician is unable to repair the fault, resulting in a waste of 
operational resources and a delay in failure repair time.

%





Presently, 
there does not exist a publicly available tool in the cable
industry that automatically differentiates maintenance issues from
service issues, 
and we are not aware of any automated private tools for separating maintenance issues from service issues.
In a typical scenario, when a cable ISP receives a customer
call, if the customer service representative cannot resolve the issue
over the phone, the ISP will first dispatch a service technician to
the customer's home by default. If the service technician cannot fix the issue,
and more customers in the nearby area start to report similar issues,
the issue is then escalated to the maintenance team for further
investigation. Incorrect diagnoses and unnecessary dispatches are
commonplace in the operations of cable broadband networks. 
They significantly contribute to an ISP's operational costs.

In this work, we aim to develop a publicly available turn-key solution
to help cable ISPs distinguish maintenance issues from service issues.
The cable Internet standard--Data Over Cable Service Interface
Specifications (DOCSIS)--has a built-in Proactive Network Maintenance
(PNM) system that periodically collects performance metrics from cable
devices~\cite{cablelabs2016docsis3}. Ideally, we could solve the fault
diagnosis problem by training a machine learning classifier with labeled \pnmdata. However, this straightforward approach is complicated by a few
practical challenges. First, there does not exist high-quality labeled \pnmdata in the public domain. PNM data are proprietary and
ISPs do not share them with the general public.  In addition, the operating conditions of
different cable networks vary significantly, and even for the same
network, the operating conditions change over time. Therefore, to use a supervised machine learning model for fault diagnosis, an ISP must
continuously obtain labeled \pnmdata for training and for coping with
model drift by its own staff. This overhead could offset a key advantage of an automated fault diagnosis tool. Furthermore, network operators prefer simple and explainable models to advanced machine learning algorithms such as neural networks, whose classification rules are difficult to comprehend~\cite{piet2023ggfast}. Second,
a typical machine learning model includes hyper-parameters, and each
ISP must tune those hyper-parameters to optimize its performance for
their networks. The need for hand-tuning again will reduce the value
of a tool aiming at improving the efficiency of cable ISP operations.
Finally, the PNM infrastructure's data collection process is
unreliable in nature, resulting in missing, misaligned, and duplicated
data points. Currently, there are no publicly available machine
  learning tools that can accurately differentiate maintenance issues
  from service issues using these data.



This paper presents the design, implementation, and evaluation of
\sys, a system that aims to effectively separate maintenance issues from
service issues without the need for labeled data and hand-tuning
hyper-parameters. In addition, it also works with unreliably-collected
\pnmdata. We make three essential design decisions to overcome the
practical challenges faced by \sys's design.  First, we decouple fault
detection from fault diagnosis. We first apply unsupervised learning
(clustering) to group cable devices that share similar \pnmdata
patterns together and employ a separate fault detection module to
detect which clusters experience a common anomaly. The size of a
cluster separates a maintenance issue from an individual service
issue. Second, we use metrics derived from customer tickets
and apply optimization techniques to programmatically tune the
hyper-parameters of \sys's machine learning model. As a result, \sys
can effectively identify groups of devices affected by shared
infrastructure faults without labeled training data or manually tuning hyper-parameters. Finally, we develop data pre-processing techniques to
convert raw \pnmdata into the input format suitable for \sys's unsupervised machine
learning model.


To evaluate \sys's design, we collaborate with a U.S. regional ISP and
obtain their \pnmdata and customer tickets from more than 70k
cable modems in a span of 14 months. For the purpose of evaluation, we
manually labeled a small set of devices as healthy, experiencing a
maintenance issue, or experiencing a service issue.  Using the
manually labeled data as ground truth, we show that \sys's clustering
algorithm achieves a rand index~\cite{rand1971objective} of 0.91 (1.0
being the highest), indicating that \sys's fault diagnosis is highly
accurate.  Each customer ticket we obtain includes a field that
describes an operator's diagnosis of the issue as being maintenance or
service. Using \sys's diagnosis results as ground truth, we
estimate that $38.52\%$ of the dispatches employed
an incorrect type of technician and could have been avoided if \sys had been
deployed. Furthermore, we compare a few metrics derived from
customer tickets during the time periods where \sys concludes
there are maintenance issues with those \sys concludes as service
issues. An example of such metrics is the time elapsed between a
customer reporting a ticket and a fault occurring.  We observe
significant statistical differences between the two sets of metrics,
further validating that \sys can effectively differentiate maintenance
issues from service issues. Field tests from our collaborating ISP confirmed \sys's effectiveness ``at the task of classifying defects as service or maintenance~\cite{anonisp2023private}.''

To the best of our knowledge, \sys is the first system in the public domain that can
effectively separate maintenance issues from service
issues in cable broadband networks. Its design incorporates
the following key features:

\begin{enumerate} 
[topsep=0.1ex,leftmargin=2\labelsep,wide=0pt]
\setlength{\itemsep}{2pt}
\setlength{\parskip}{2pt}
\item \sys employs an architecture that uses unsupervised learning and
  anomaly detection applied to PNM data to accurately distinguish
  maintenance issues from service issues without the need for labeled training data. 
\item We develop data pre-processing techniques that enable
  machine learning systems to use \pnmdata as inputs, which include
  missing, duplicated, and misaligned data points.
\item By utilizing customer ticket statistics and optimization
  techniques, \sys automates hyper-parameter tuning, enabling any cable ISPs to deploy the system in their networks' operating environments
  without the need for hand-tuning hyper-parameters.
  \end{enumerate}

\section{Background}
In this section, we introduce the cable broadband network architecture, describe the datasets we obtain, and present relevant efforts in this area. 

\subsection{Hybrid Fiber-Coaxial (HFC) Architecture}

\begin{figure}[tb!]
\centering
    \includegraphics[width=.4\textwidth]{./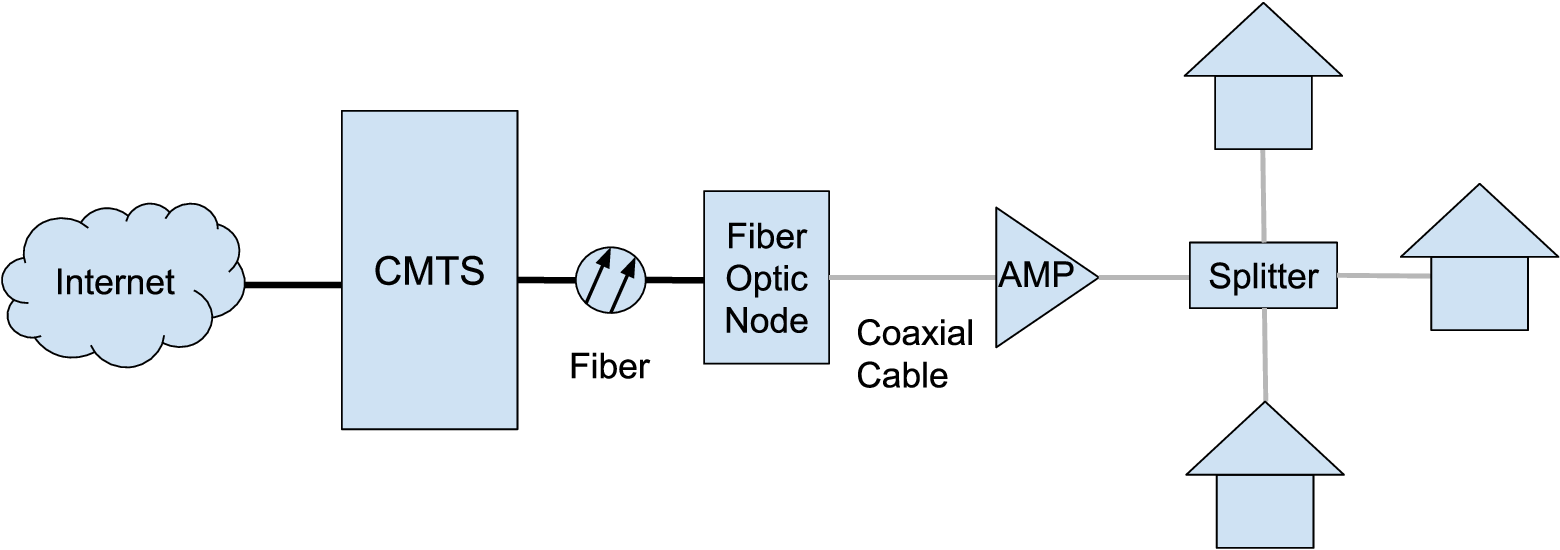}
    \caption{\small\textbf{An overview of the Hybrid Fiber Coaxial (HFC) architecture.}}
    \label{f:hfc}
\end{figure}


A cable network has a hybrid fiber and coaxial (HFC) architecture.
Figure~\ref{f:hfc} presents a schematic illustration of this
architecture. The Cable Modem Termination System (CMTS) is located at the
headend of an ISP, providing Internet connections to cable modems
located at subscribers' premises. A CMTS connects to devices called
Fiber Optic Nodes (\FNs) via optical fibers. \FNs
convert radio frequency (RF) signals to light signals and vice
versa. They connect to individual homes or businesses via coaxial
cables. An \FN typically serves a few hundred 
customers. For example, in our dataset, the average number of
customers an \FN serves is around 250.  To alleviate RF signal
attenuation, RF amplifiers are deployed in the coaxial segments of an
HFC network to ensure that  RF signals delivered to end users are
strong and of high quality.


Fault diagnosis remains a challenging issue in cable broadband
networks. The coaxial segment of a cable network is prone to
radio frequency impairments.  In response to this challenge, the cable
industry developed the PNM network monitoring framework to facilitate
anomaly detection and fault diagnosis. A monitoring server sends
periodic Simple Network Management Protocol (SNMP)~\cite{stallings1993snmp} queries to collect
performance metrics from customer cable devices as well as a CMTS. We
refer to such data as \pnmdata.



\subsection{Datasets}\label{sec:dataset}

For this study, an anonymous cable ISP (\anonisp),
provided us with the \pnmdata collected from their networks and their
customer tickets. We describe each dataset in turn.

\paragraph{\Pnmdata} 
These data are collected from 70k+ customer devices and span a period
of 14 months. A monitoring server collects the data approximately
every 4 hours in-band. According to \anonisp, their
infrastructure cannot support a shorter collection interval.
PNM infrastructure~\cite{ron2020full} can collect both upstream (from a customer device
  to the CMTS) and downstream (the reverse direction) telemetry
  data in DOCSIS 3.0 and 3.1 cable devices. At the time of this study, \anonisp did not automatically collect downstream PNM data (partly because their daily operations did not depend much on PNM data prior to this study and collecting downstream PNM data is more resource-consuming as there are more downstream channels than upstream channels).  Therefore, this study is based on upstream PNM data. 


%

Each data collection point includes the several fields relevant to this study:

\begin{itemize} [topsep=0.1ex,leftmargin=2\labelsep,wide=0pt]
        \setlength\itemsep{0em}
\setlength{\itemsep}{0pt}
\setlength{\parskip}{0pt}
\item \emph{Timestamp:} the time at which the data point was received by the collection server.
\item \emph{Anonymized device id:} the hashed MAC address of a
  customer device.
\item \emph{\FN}: the identifier of the fiber optic node serving this device.
\item \emph{SNR}: the  signal-to-noise ratio of a customer device's transmission signal
  measured at a CMTS.
\item \emph{Tx Power}: the signal transmission power when a cable device
  sends a signal. This signal is recorded by a customer's device and collected by the collection server from each cable device. 
\item \emph{Rx Power}: the power of the received signal at the CMTS.
\item \emph{Pre-Equalization Coefficients}: the coefficients used by the pre-equalizer component in a customer device to compensate for linear signal distortions in coaxial cables. 
\end{itemize}


\paragraph{Customer tickets} \anonisp creates a customer
ticket to document how it handles a customer call. Each ticket
contains several fields, including the customer's account number,
the ticket open time, the ticket close time (if any), a short
description of the problem and its actions, and a category of the
issue based on the ISP's diagnosis. The category includes two
classes: a part-of-primary ticket or not. The last field is crucial to
this work. A part-of-primary ticket indicates that the ISP considers
the issues the customers are experiencing a maintenance issue. Thus,
it groups the tickets as one conceptual ``primary'' ticket. All
part-of-primary tickets that belong to the same maintenance issue have
the same primary ticket identifier. In this work, we refer to
part-of-primary tickets as \emph{maintenance tickets} and other
infrastructure-related tickets as \emph{service tickets}.

    The customer ticket data we obtain span over the same period as the
    \pnmdata and are from customers located within the same
    networks.

 %






\subsection{Ethical Considerations} The PNM data we received includes encrypted cable devices' MAC addresses and  scrambled location data (latitude + $noise_1$ and longitude + $noise_2$).
\zcut{For each of the cable devices, we also have their scrambled location data (latitude + $noise_1$ and longitude + $noise_2$.) All researchers involved in this research have concurred not to deanonymize the MAC addresses and location information. }
Other PNM data are related to the physical signal properties of the cable devices. 
We discussed the data and the scope of this research with the IRB of our organization before conducting this work. The IRB determined that this work does not meet the definition of research with human
subjects and it is appropriate for us to conduct this study. This work raises no other ethical concerns. 

%


\subsection{Related Work}
\label{sec:cablemon}

\paragraph{PNM Best Practice} The PNM best practice
document~\cite{cablelabs2016docsis3} proposes to use a clustering
algorithm to separate maintenance issues from service issues using the
pre-equalization coefficients collected from cable devices.
Pre-equalization coefficients are frequency-domain data and capture
signal distortions in the frequency domain. Each cluster produced by
the algorithm corresponds to a group of cable devices sharing a
similar signal distortion at one \pnmdata point. However, these
distortions may have already been compensated for and do not manifest
themselves as user-perceivable performance issues. In addition,
pre-equalization coefficients at each data point only reflect the
instantaneous signal distortions and are prone to noise fluctuations.
We find them ineffective in detecting network performance issues
(\S~\ref{sec:compare_to_pnm}). Therefore, this work does not use them as features for
either anomaly detection or fault classification.

Similarly, Volpe \ea~\cite{volpe2021machine} propose to use the full
band spectrum (both upstream and downstream) data and apply
DBSCAN~\cite{ester1996density} to group cable devices sharing the same
anomalous spectrum patterns to reduce operational wastage caused by
erroneous dispatches. The authors acknowledge in their work that there
are no good approaches to tune the hyper-parameters of the system to
set an anomaly detection threshold and the work does not include the
experimental evaluation of the effectiveness of the proposed approach.


Different from previous approaches, \name treats \pnmdata as
time-series data and applies clustering techniques using the time-domain
similarity of \pnmdata.  It tackles the challenges associated with
time-domain data, such as data incompleteness and alignment problems
commonly encountered when data are collected unreliably in production
systems.

\paragraph{CableMon}


\zcut{Similar to this work, }
CableMon~\cite{hu2020cablemon} treats \pnmdata
as time-series data. Differently, it focuses on the task of
determining an anomaly detection threshold for a PNM metric associated
with a cable device. 
\zcut{A novel technique developed by CableMon is to
employ the statistics of customer tickets to set the anomaly detection
thresholds. Although customer tickets are generally noisy and cannot
be used directly as the ground truth for training or evaluation, their
statistics still indicate the trend toward network
faults. Specifically,} CableMon defines \emph{ticketing rate}, \ie, the
average number of customer tickets created in a unit of time, as the
statistics guiding its anomaly detection. Intuitively, the ticketing
rate measures how frequently customer tickets are reported for a
certain device. The authors of CableMon observed that the higher the
ticketing rate is, the more likely there is a fault. \zcut{Given
that more customer tickets also naturally imply more operational
resources for an ISP, CableMon can effectively detect network faults
and reduce operational costs.}

%


\sys is inspired by this per-device anomaly detection work, with the
observation of a gap in distinguishing network fault types. Although
CableMon can tell if a single device has ongoing anomalies, it still
cannot help ISPs determine the best team (maintenance or service) to
dispatch. 
\sys can not only
assist with the decision of whether there is an anomaly, but also
assist with the type of dispatch. 
\zcut{\sys adopts and extends the
ticketing rate metric and the anomaly threshold setting technique
proposed by CableMon.}

\paragraph{Other Related Work}

Orthogonal to \sys, previous study~\cite{hu2022characterizing} has modeled characteristics of cable network faults and showcased physical-layer transmission errors are significant. There exists other fault detection work in the domain of cable
broadband networks
~\cite{lartey2017proactive,rupe2019kickstarting,cablelabs2016docsis3,keller2019proactive,zhou2020proactive,ferreira2020convolutional}. However,
this body of work either uses static anomaly threshold settings or
requires manual labeling of \pnmdata.
There also exist tools that aim
to assist ISPs' manual troubleshooting by offering visualization and
suggestions to
operators~\cite{milley2019proactive,rupe2020profile,volpe2021machine,walsh2009pathtrak}. \sys
aims to automate fault diagnosis without manually labeled data and is
orthogonal to this work. It can be enhanced with the visualization
tools. Likewise, it is possible to design a fault diagnosis system for
cable networks with manually labeled \pnmdata and more advanced
machine learning
techniques~\cite{goodfellow2016deep}. We
chose to explore the design without labeled \pnmdata and evaluate the
hypothesis of whether such a design can be effective.

\zcut{Researchers and practitioners have used customer tickets for network
troubleshooting. NetSieve~\cite{potharaju2013juggling} analyzes
natural language text in customer tickets to automate root cause
analysis. LOTUS~\cite{venkataraman2018assessing} analyzes natural
language text in customer tickets to estimate the impact of network
outages. \sys uses ticketing rates instead of the natural language
text included in customer tickets for setting the hyper-parameters of
its clustering model and for determining anomaly thresholds. It is
possible that adopting natural language analysis can improve its fault
diagnosis performance, but we leave such a study as future work and
opt to use explainable machine learning algorithms and metrics for
operational simplicity.}



\section{Design Rationale} 
\label{sec:overview}


\begin{figure*}[t!]
\centering \subfigure[]{
  \includegraphics[width=.4\textwidth]{./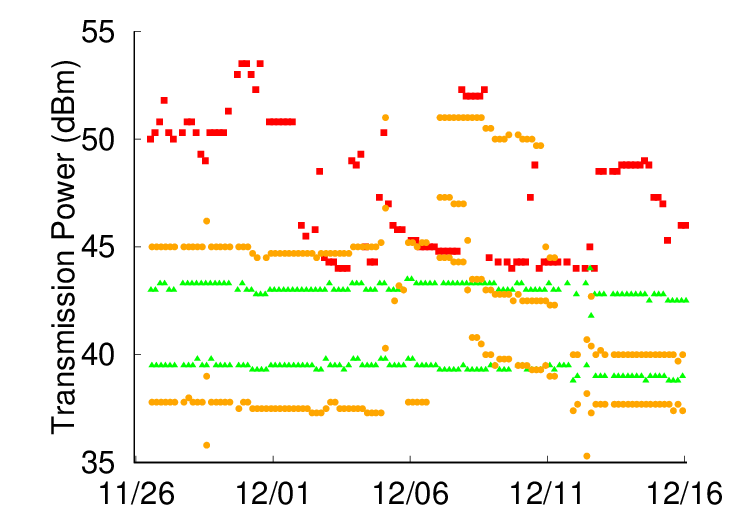}
    \label{f:tx_power_examples}
    } \hspace{10pt}
    \subfigure[]{
    \includegraphics[width=.4\textwidth]{./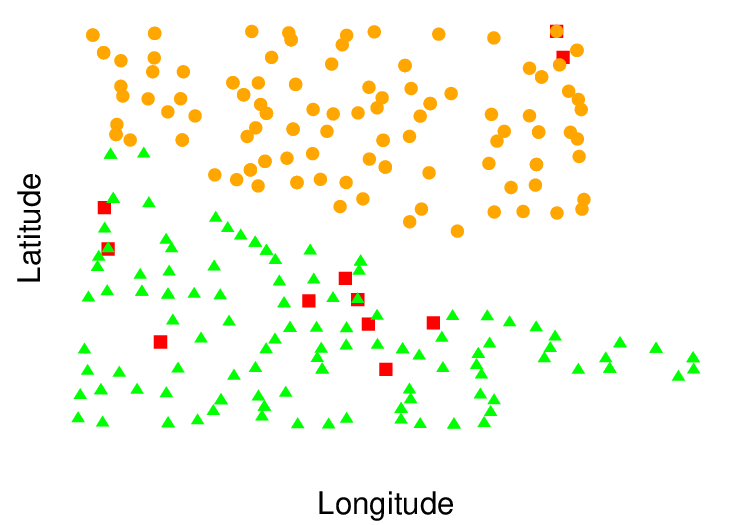}
    \label{f:tx_power_examples_map}
    }
    \vspace{-10pt}
    \caption{{\small{\bf Figure~(a) shows how the transmission powers of \zzy{several} 
    cable devices in the same fiber optical node fluctuate over time. Orange \orangedots are devices that show the same anomalous transmission power patterns. Green \greendots are devices that show normal patterns. Red \reddots are devices that show distinct anomalous patterns. Figure~(b) shows the locations of the cable devices using the same colored icons.}}\label{f:align}}
\end{figure*}


%


We aim to develop an easy-to-deploy fault diagnosis system that can
automatically distinguish maintenance issues from service issues in
cable broadband networks. At first glance, this appears to be a
straightforward task: all we need is a machine learning classifier
that automatically classifies \pnmdata collected from each cable
device as healthy, with a maintenance issue, or with a service issue.
Nevertheless, this simple problem is complicated by several practical
challenges that need to be addressed.

\subsection{Challenges}
\label{sec:challenges}

The initial challenge we face is the high cost and time involved in
acquiring high-quality labeled \pnmdata. Those data are proprietary
and it requires specialized expertise to identify anomalies
accurately.  A domain expert must examine a large number of cable
devices over a sufficiently long period of time to cover all possible
anomalous patterns. As an example to illustrate how tedious this task
is, we depict the transmission powers of six cable devices in
Figure~\ref{f:tx_power_examples}. A domain expert must label thousands
of such figures at a minimum to generate useful training data. 

One might be attempted to utilize customer tickets as automatic
labels: when there is a maintenance ticket, we label the data point as
maintenance, and vice versa. In the absence of a ticket, we label a
data point as healthy. However, our attempt to adopt this approach
revealed its ineffectiveness. This is because customer tickets are
highly prone to noise: customers may call when there are no
infrastructure problems and may not call when there are
problems. Moreover, according to \anonisp, their diagnosis of a
maintenance or service issue is inaccurate. Therefore, this
ticket-based-labeling approach does not lead to high-quality labels.


The second challenge we face is that a machine learning model unavoidably incorporates many
hyper-parameters. However, cable ISPs operate in vastly different
conditions. For instance, each may use different frequency channels or
operate in varying climatic conditions. Therefore, each ISP must label its
own \pnmdata to train and tune the model for effective fault diagnosis
and to mitigate model drift. Yet many cable ISPs do not have dedicated
personnel for such tasks.

Third, unsupervised learning models compare the similarity of their
input data to find similar patterns. However, \pnmdata are collected
unreliably and contain many missing and duplicated data
points. Furthermore, the data collection system introduces randomness
to avoid network congestion. Thus, the data collected from each device
are not temporally aligned. It is preferable to compare data points that are collected close in time to produce meaningful comparisons. However, using the raw \pnmdata as inputs may misalign data points that are distant in data collection times, thereby rendering similarity comparisons inaccurate.


\subsection{Goals}

To mitigate the practical challenges mentioned above, we ask the
  following question: \emph{is it feasible to develop a fault diagnosis
  system for cable networks using unlabeled PNM data and
  hyperparameters that are automatically tuned?} To answer this
  question, we identify the following design requirements for \sys:

%
\paragraph{No manual data labeling:} \sys must not rely on labeled
 data for model training or tuning.

\paragraph{Automated hyper-parameter tuning:} \sys must be able to
tune its hyper-parameters programmatically without human intervention.

\paragraph{Effective despite unreliably-collected data:} \sys must effectively
separate maintenance issues from service issues using the existing
\pnmdata that include missing, duplicated, and unaligned data across
cable devices.

\subsection{Motivation}

To gain insight into how to separate a service issue from a
maintenance issue, we manually examined several anomaly patterns by
plotting PNM metrics described in \S~\ref{sec:dataset}.
Figure~\ref{f:tx_power_examples} shows an example. In this figure, we
sampled the transmission power levels of devices with three anomaly
patterns from an \FN. The orange \orangedots show the
transmission power levels of three devices that exhibit similar
anomalous patterns in the changes of their transmission powers. When a
noise leaks inside a cable transmission channel, a device increases
its data transmission power to overwhelm the noise. So a sudden
increase in transmission power is an indicator of noise invasion. The
green \greendots show the transmission power levels of two devices that are
not impacted by the noise. The red \reddots show the transmission power
levels of a device that exhibits a different anomalous pattern.

Figure~\ref{f:tx_power_examples_map} shows the geographic distribution
of the devices in the \FN. We use the same color coding schemes to
plot the devices. The orange \orangedots plot the scrambled
geographic locations of the devices that exhibit similar anomalous
patterns as depicted in Figure~\ref{f:tx_power_examples}. Each red
\reddot shows the scrambled geographic location of a device that
exhibits a distinct anomalous pattern. And the green \greendots show
the locations of the devices that do not exhibit any anomalous
pattern.

From this data visualization step, we gained the conceptual
understanding that we could use clustering to distinguish a
maintenance issue from a distinct service issue.  In addition, we
  observe that fault detection is independent of clustering, as both
  the healthy devices (the green group) and the unhealthy ones (the
  orange group) form distinct clusters.  These observations, together, motivate \sys's
architecture, which we describe next.


\if 0

From the data visualization step described above, we gained the
conceptual understanding that we can use clustering to distinguish a
maintenance issue from a service issue. However, after a more detailed
examination of the \pnmdata, we discovered several unique challenges introduced by cable broadband networks that prevent us from comparing the similarity between the \pnmdata among different modems, which is required by classic clustering algorithms.

Firstly, due to network packet loss, timeout and destination unreachable, 
the \pnmdata we have contains many missing data points. The \pnmdata is collected via Simple Network Management Protocol (SNMP), which is implemented using UDP. If a data collection request or response is lost during transmission, or the destination did not respond to that request due to an outage or disconnection, we will have a missing data point, which means we failed to collect \pnmdata from this request. Comparing the similarity between the \pnmdata with and without missing is challenging. Moreover, the data collection framework of \anonisp will not record the missing data points, which makes it difficult to determine whether there is any missing data between two data points in our dataset.

Secondly, due to network delay and low network capacity, different CMs' \pnmdata are unaligned in the time axis. \anonisp sent SNMP requests to CMs in sequence instead of sending all requests simultaneously, in order to avoid network congestion.  Moreover, even if SNMP requests are sent simultaneously, responses will not arrive at the same timestamp due to network delay. These factors make \pnmdata from different CMs unaligned in the time axis, which makes it difficult for clustering algorithms to identify and group similar patterns.

Finally, the \pnmdata we have are unlabeled. Had we had a sufficient number of high-quality labels that map a device's anomalous \pnmdata to a fault type, we could have used supervised learning to train a model for fault diagnosis, obviating the need for similarity comparison. However, manual labeling the data for model training is not feasible, as it takes weeks for domain experts to label just a small number of anomalies (\S~\ref{sec:manual_label}). So we are limited to adapting a supervised learning algorithm to work with the \pnmdata we have. Moreover, without labeled data, it is unclear how to evaluate the quality of an unsupervised learning algorithm prior to real-world deployment. In the rest of this paper, we describe how we address these challenges.

\fi

\section{Design}
\label{sec:design}

In this section, we describe \sys's design\zzy{, including the programmatically approach to set hyper-parameters}.

\subsection{System Overview}

\begin{figure}[t!]
\centering
    \includegraphics[width=.4\textwidth]{./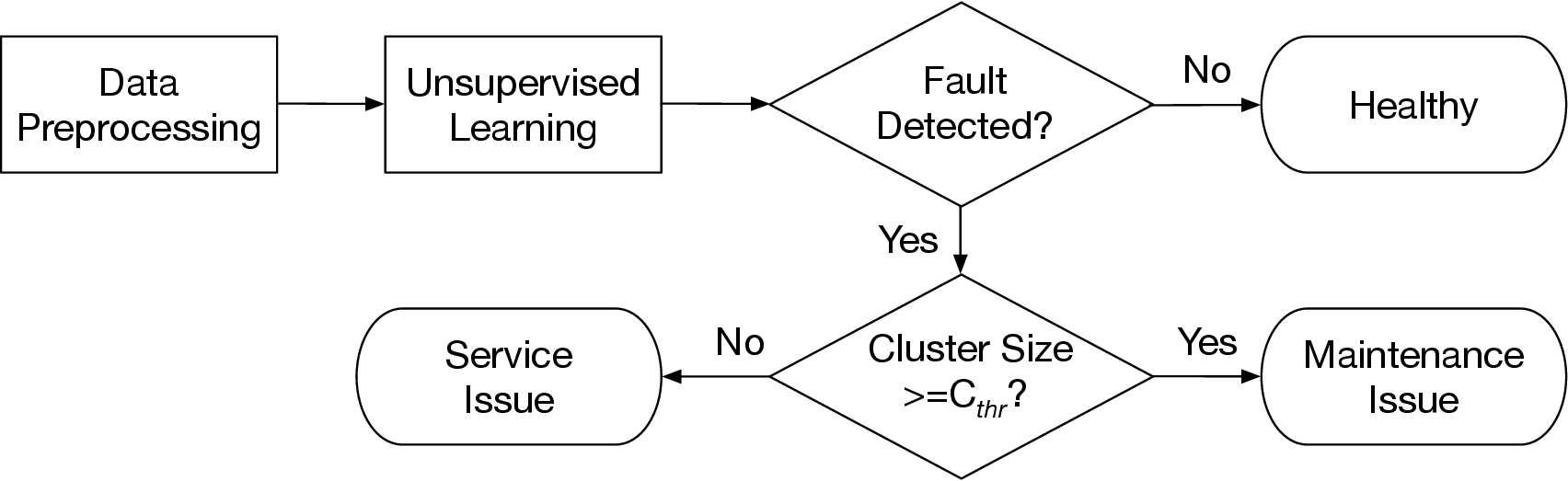}
    \caption{\small\textbf{An overview of \sys's design. $C_{thr}$ is a threshold specified by a network operator based on its network topology.}}
    \label{f:design_overview}
\end{figure}

 %

We make several design choices to address the practical
challenges \sys's design faces. First, we decouple common \pnmdata pattern recognition
from fault detection, as the former can be accomplished with
unsupervised learning (\S~\ref{sec:clustering}).

Second, we design optimization techniques that use customer tickets
and operational knowledge as clues to tune hyperparameters of an
unsupervised learning
model and set the anomaly detection thresholds
programmatically. While customer tickets are noisy and cannot
be used for precisely labeling \pnmdata, they still offer valuable
information: when there is a customer ticket, a network is more likely
to experience an infrastructure problem than when there is no customer
ticket. This observation motivates the design of using the ticketing
rates of maintenance tickets to choose the values of \sys's
hyper-parameters (\S~\ref{sec:set_similarity_threshold}).

Finally, we design techniques to pre-process \pnmdata. After
pre-processing, each device will have the same number of data points
so we can feed the pre-processed data to a clustering model
(\S~\ref{sec:pre-processing}).

Figure~\ref{f:design_overview} shows the workflow of \sys. We first
pre-process the selected raw PNM features from all cable devices in a
fiber optical node. After pre-processing, all devices have the same
number of data points and these data points are aligned in time. We
then feed the pre-processed data to a clustering model, which outputs
clusters of devices that share similar \pnmdata patterns. For each
cluster, we detect whether it exhibits any anomalous pattern. A cluster is identified as ``healthy'' if there are no anomalous patterns. Otherwise, we distinguish a maintenance issue and a service issue with the cluster size: if an anomalous cluster has more than $C_{thr}$ devices, we classify it as experiencing a maintenance issue. Otherwise, the cluster is marked as a service issue. 
\zzy{The threshold $C_{thr}$ is a configuration parameter, chosen as 5 by \anonisp according to their physical networks in this work, \ie, the majority of amplifiers serve more than 5 devices.}

Each invocation of \sys takes the \pnmdata from cable devices
connected to one \FN, as those devices share the same
RF domain. 

\sys's clustering component takes pre-processed \pnmdata as input and
outputs groups of cable devices that share similar \pnmdata
patterns. We select a subset of independent PNM metrics as features
and cluster cable devices by each feature. We describe feature
selection in \S~\ref{sec:feature}.  Each input feature vector includes the
pre-processed \pnmdata points collected between a time interval $(t-d, t]$, where $t$ is the time of diagnosis and $d$ configures
  a look-back duration, \eg, one or two
  days.  \zzy{In practice, \anonisp collects \pnmdata every 4 hours, and all devices have 3 upstream channels. Because \anonisp intends to perform daily detection, we set $d = 1$ day. In this case, we obtain 6 data points in each channel, a total of 18 when
there is no data loss. We also varied $d$ from 1 to 7 days and did
not observe any significant performance difference, indicating a longer look-back window does not benefit the daily detection.}
  
  \sys's
  pre-processing unit aligns each PNM feature vector in time so we can
  compare the similarity of any two devices' feature vectors. We
  define a similarity comparison function for each PNM feature and
  describe them in \S~\ref{sec:feature}.

\subsection{Clustering}
\label{sec:clustering}

We consider and evaluate several clustering
algorithms~\cite{ester1996density,yu2015hierarchical,tan2016introduction,
  timm2002applied} and find that the average-linkage hierarchical
clustering algorithm~\cite{yu2015hierarchical} performs the best
(\S~\ref{sec:compare_choice}).  At a high level, this clustering
algorithm works as follows. For each feature $f$ we selected, the
clustering algorithm aims to group devices with similar feature
vectors together until the similarity between groups of devices falls
below a threshold $s_f$. Specifically, it first treats each device
(described by a feature vector) as a single cluster. Second, it
calculates the similarity between every pair of clusters and finds two
clusters with the highest similarity value. The similarity between two
clusters is calculated by averaging all similarity values between
pairs of devices in the two clusters. Third, the algorithm merges the
two clusters with the highest similarity value into a single
cluster. Next, the algorithm repeats the second and third steps until
only one cluster is left or the highest similarity value between any
two clusters is less than the similarity threshold $s_{f}$. Finally, the
algorithm outputs the clusters that have not been merged.


\subsection{Setting the Similarity Threshold}
\label{sec:set_similarity_threshold}



\begin{figure}[t!]
\centering
    \includegraphics[width=.4\textwidth]{./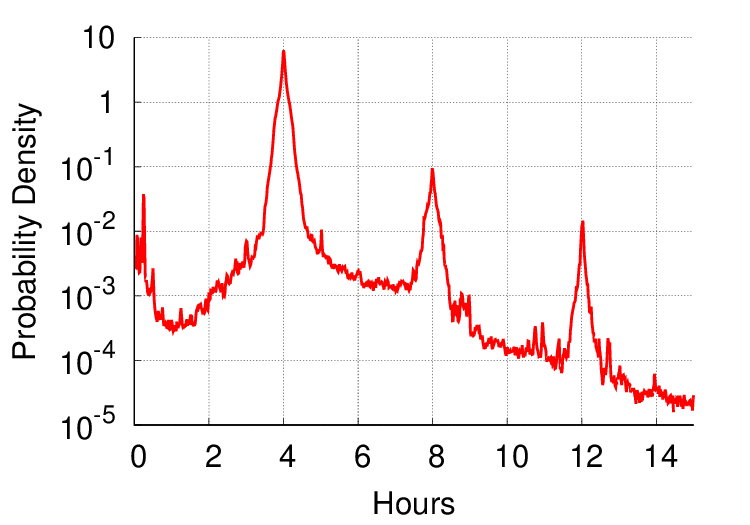}
    \vspace{-5pt}
    \caption{\small\textbf{This figure shows the probability density function of data collection intervals.}}
    \label{f:missing_interval}
\end{figure}

The similarity threshold $s_f$ for each feature $f$ is an important
hyper-parameter and \sys's performance is sensitive to its value.  If
we set the threshold too high, \sys may separate devices that are
affected by the same network fault into multiple clusters. Conversely,
if we set the threshold too low, it may group devices that are
affected by different maintenance issues into the same cluster.

How do we choose a proper similarity threshold? If we had labeled
training data, we could use the grid-search
method~\cite{lavalle2004relationship} to iterate over possible values and set the threshold that
minimizes clustering errors. 
\zcut{That is, we search through the range of
the similarity metric $s_f$ at small steps, \eg, from 0 to 1 with an
increment such as 0.001, and for each value, we compute the false
positives and false negatives using the labeled data and choosing a
value that minimizes the errors.} Lacking of labeled data, we instead use customer ticket statistics to
guide the search for the similarity threshold. Our insight is that if
\sys correctly identifies groups of devices that are impacted by the
same maintenance issue, then on average, we should observe a higher
fraction of maintenance tickets reported by these groups of devices
than other devices. In contrast, if \sys partitions the cable devices
rather randomly, then we should not observe significant statistical
differences of the reported maintenance tickets among different
groups.

Motivated by this insight, we naturally evolve the \emph{ticketing rate} of CableMon (\S~\ref{sec:cablemon}) to \emph{maintenance ticketing rate} and devise the following mechanism to set
the similarity threshold $s_f$ for each PNM feature we use. We
partition the PNM dataset we have into a training set and a testing
set. For each data point $i$ in the training set and for each possible
value of $s_f$, we use \sys to diagnose whether a device $j$ is
impaired by an infrastructure fault and the type of fault, as shown in
Figure~\ref{f:design_overview}. If \sys considers a device experiencing a
maintenance issue, we mark this collection period of this device as a
maintenance event. We use $I_{i,j}$ to denote the length of the data
collection interval between data points $i$ and $i-1$ of device $j$. Similarly, if
\sys considers a device experiencing a service issue, we mark the
collection period of the device as a service event. We then count the
number of maintenance tickets reported by all devices
during all collection periods that are marked as maintenance issues
and compute a maintenance ticketing rate during maintenance events as
\begin{equation}
  R_{m,M} = \frac{K_{m,M}} {\sum_{i,j} I^{M}_{i,j}}
  \label{eq:rt_m}
\end{equation}
where $K$ denotes the number of tickets, $R$ denotes the ticketing
rate, the first subscript $m$ denotes maintenance tickets, and the
second subscript $M$ denotes a diagnosed maintenance issue, and
  $I^{M}_{i,j}$ is the length of a collection period that is marked as
 experiencing a  maintenance issue.

We also count the number of maintenance tickets $K_{m,S}$ reported by
all devices during all collection periods that are marked as service
issues.  We compute a maintenance ticketing rate during service events
as
\begin{equation}
  R_{m,S} = \frac{K_{m,S}}{\sum_{i,j} I^{S}_{i,j}}
  \label{eq:rt_s}
\end{equation}
where $S$ indicates a diagnosed service issue, and
$I^{S}_{i,j}$ is the length of a collection period marked as
experiencing a service issue.  We define the maintenance Ticketing
Rate Ratio ($TRR_m$) as
\begin{equation}
  TRR_m = \frac{R_{m,M}}{R_{m,S}}
  \label{eq:trrm}
\end{equation}
For each feature $f$ \sys uses, we use grid-search to find the
similarity threshold value $s_f$ that maximizes $TRR_m$.
In Appendix~\ref{sec:cluster_proof}, we prove that the $s_f$ maximizing $TRR_m$
yields the
optimal clustering result: it minimizes both false positives (\ie, a device without any maintenance issues is detected
as with a maintenance issue) and false negatives (\ie, a device with a maintenance issue is
detected as without any maintenance issues). 
Intuitively, based on \anonisp's fault diagnosis
process, maintenance tickets contain fewer false positives than
service tickets. Thus, if we assume the operator-labeled
maintenance tickets approximate the unknown but existing ground truth
of maintenance events and \sys's fault detection mechanism is
accurate, then the maintenance ticketing rate during maintenance
events approximates \sys's true positives and the maintenance ticketing
rate during service events approximates \sys's false negatives. Maximizing
the ratio of the two ticketing rates leads to high true positives
and  low false negatives.

\if 0
The Louvain algorithm was originally designed to efficiently extract
the community structure of a large graph. To apply this algorithm, we
must first generate a graph using the telemetry data. To do so, we
treat each device as a vertex in a graph. We then define two
thresholds to create an edge between two devices. The first threshold
$o$ is the number of minimum overlapping data points a pair of devices
have. Only when two devices' overlapping data points exceed $o$, we
compute the similarity score between the two devices. We define a
second threshold $s_f$ for each feature $f$ such that if the
similarity score between the two devices' feature vectors exceeds this
threshold, we create an edge between the two devices. Thus, we create
a graph for each independent feature identified in
\S~\ref{sec:feature}. We then run the Louvain algorithm for the time series data of each performance metric separately, since the metrics we selected are not highly correlated. The parameters $o$ and $s_f$ provide us with some control of
how we desire the devices to be clustered. We describe how to set
their values in the next subsection.


cannot handle the situations where two devices are
incomparable. We need to assign a similarity between these two devices
in order to obtain the pairwise similarity among all the
devices. Whatever the similarity we set is, it could introduce noise
or outliers into the clustering algorithm.

Therefore, our question is: Is there a hierarchical clustering algorithm that does not require pairwise similarity among all devices? We get inspiration from graph theory. We can build a graph from our data, let devices as vertices and similarity as edges. Traditional average-linkage algorithms require a complete graph. However, in this work, we will build a non-complete graph and aim to group the vertices into different partitions. This problem is known as the community detection problem. We select Louvain~\cite{blondel2008fast}, since Louvain is considered as a variant of average-linkage hierarchical clustering algorithms~\cite{yu2015hierarchical} and it can achieve high efficiency.

Our algorithm is described in
Appendix~\ref{appendix:clustering}.
We let each device be a vertex in a graph. If we can calculate the similarity between two devices, and this similarity exceeds the threshold, we will add an edge between these two vertices. Then, we run Louvain algorithm and have the output as the clustering results. 
\TODO{Describe the correlation between RX/TX and SNR/uncorrected.}

In~\cite{xu2015comprehensive}, the authors mentioned several categories of clustering algorithms. Many of the clustering algorithms require distances between all pairs of  devices. One of the most popular clustering algorithms, k-means~\cite{scholkopf1998nonlinear}, is not suitable for our work, since it requires a pre-defined parameter $k$, which indicates the number of clusters. The purpose of \sys is to distinguish service issues and maintenance issues in cable networks, but multiple service and maintenance issues could happen at the same time. Therefore the network issues cannot be simply described by a fixed $k$. Density-based clustering algorithms such as DBSCAN~\cite{ester1996density} cannot achieve good performance as well, since different clusters in our data may have different densities. Distribution-based clustering algorithms such as GMM~\cite{rasmussen1999infinite} are not the best choice. The data of devices affected by the same maintenance issue may have different distributions due to the existence of missing data. Among the clustering algorithms we investigated, the hierarchical clustering algorithm performs best, since it does not require the input parameter $k$, and does not rely on consistent density or distribution.

\fi

\if 0
- a summary of clustering algorithms. and why they don't 

First, we describe how to calculate the distance between two devices. Each device's vector is a time series data from time $T_l$ to time $T_r$. \zzy{$\{T_i\}$ is called "PNM data set" above, but looks like it is actually a set of timestamps? It is not PNM data.} However, the number of data points may be different for different devices. To calculate the distance between device $x$ and device $y$, we can pick up the time series data from $S_{(x, y)}$, satisfy that $V_{x|xy} = \{T_{i_x}| (T_{i_x},F_x(T_{i_x}))  \in S_{(x, y)} \ \& \ T_l <= T_{i_x} <= T_r\}$. \zzy{nit: \& -> \&\&?} Then, we could use $V_{x|xy}$ and $V_{y|xy}$ \zzy{If $S_{(x,y)} = S_{(y, x)}$, I'd suggest to change this to $V_{y|yx}$.} to calculate the distance between $x$ and $y$. \zzy{May keep consistent with above to use $a, b$ as the notation of devices? And it should be defined at its first occurrence, not here.}

Then, the next step is to find a good metric that can indicate the distance between $x$ and $y$. Here, we use the Spearman's rank correlation coefficient as the distance metric. We would like to cluster the devices with similar abnormal patterns, which we can quantify the similarity by the correlation between two vectors.  \zzy{Why the fourth correlation metric, Spearman?}

Next, we need to find a good clustering algorithm that can help us cluster the devices. Traditional clustering algorithms require the distance calculated from every vector pairs. However, our data and our method cannot calculate the distance between device $x$ and device $y$ for some cases. For example, if $T_l$ is Monday and $T_r$ is Sunday, the time interval we used to run our clustering algorithm is nearly 1 week. For device $x$, if we only collected its PNM data from Monday to Wednesday, and collected the PNM data from device $y$ from Thursday to Sunday, then, both the $V_{x|xy}$ and $V_{y|xy}$ will be empty, which means we are not able to calculate the distance between these two devices. If we set this distance as some fixed value, \zzy{how about interpolation?} it will introduce noise for the clustering algorithm. Therefore, we need to find a new algorithm to cluster the devices using instantaneous metrics.

Here is our algorithm: For each $S_i$, we treat it as a vertex $v_i$.
Let $V = \{v_1, v_2, \cdots, v_x\}$ where $x$ is the number of devices. \zzy{Is $V$ already used for sth else?} 
For each device $i$ and device $j$, if we can calculate $DISTANCE(S_{i}, S_{j})$, \zzy{We don't provide a concrete implementation for $DISTANCE$? How is it calculated? This is confusing especially when we have introduced 4 distances above.} we could add an edge $E_{i,j}$ between $v_i$ and $v_j$. The weight of this edge is $DISTANCE(S_{i}, S_{j})$. Then, we could built a weighted graph $G = (V, E)$.

An assumption here is, if a set of nodes are in the same group, they must be densely connected internally with low weighted. If we get rid of the weighted edges, then this problem could be converted to a community detection problem. We could manually set different thresholds. For each threshold, we build $G_{threshold} = G(V, E\backslash{e|e \in E, e_{weight}>threshold})$. \zzy{Correct me if I'm wrong, shouldn't this be $e_{weight}<threshold$?} Then, we run the community detection algorithm~\cite{blondel2011louvain}, so that we could find a partitioning $V_1, V_2, \cdots, V_k$. By setting the different thresholds, we could obtain different partitions. Among those partitions, we select the results with the highest internal connection density and lowest external connection density. This partitioning is the clustering results for the instantaneous metrics.
\fi

\begin{figure}[!t]
\centering
  \includegraphics[width=0.6\linewidth]{./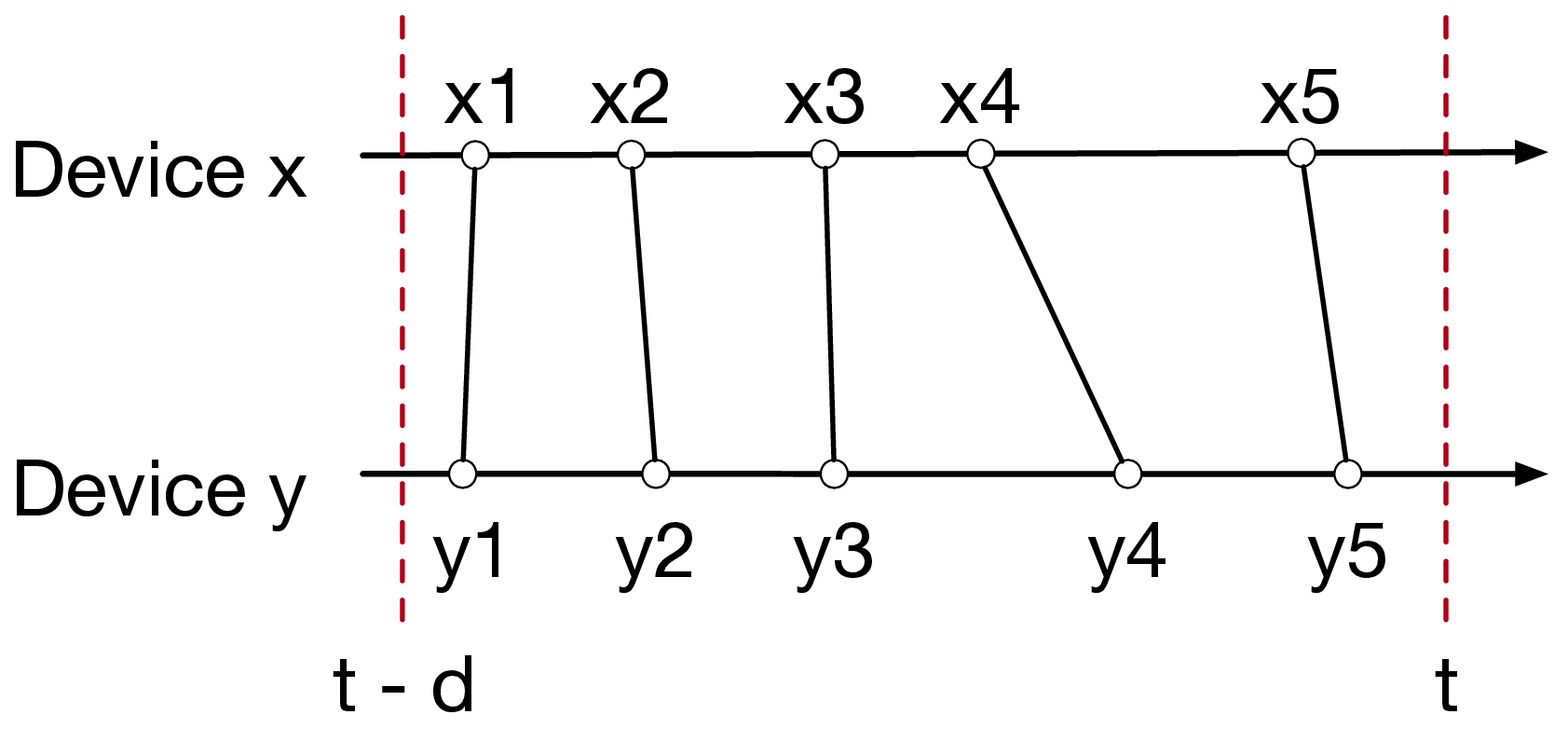}
  \caption{\small\textbf{This figure shows an example that without inferring missed data points, 
a naive alignment algorithm may pair up two data points $x_4$ and $y_4$, which are not close in time.}}\label{f:align_f}

\end{figure}

\subsection{Feature Selection and Comparison}
\label{sec:feature}

The PNM system collects many metrics, but for simplicity and
computation efficiency, it is desirable to use only a minimum set of
effective and independent features for fault diagnosis. We describe
how we choose the features \sys uses and the similarity function we
choose to compare each feature.

We start with a candidate set of features that CableMon~\cite{hu2020cablemon} finds effective in detecting infrastructure
faults in cable broadband networks.  Some of these features contain
the instantaneous values measured at the data collection times, while
others are cumulative values (e.g. codeword error counters) over
time. We find that the instantaneous metrics, including SNR, Tx power,
and Rx power, are effective features for grouping devices with shared
maintenance issues together - they achieve high maintenance
ticketing rate ratios when used as clustering features. On the other hand,
cumulative metrics, although effective in detecting
anomalies~\cite{hu2020cablemon}, are not effective as clustering
features.  For example, the values of codeword
error counters are affected by whether users actively use the Internet
or not. Devices that share the same maintenance issue may
or may not have highly correlated codeword error counters if the
subscribers' usage patterns differ. Hence, this work uses only
instantaneous metrics: SNR, Tx power, and Rx power. Among them, Tx and
Rx powers are statistically correlated. Finally, we retain two independent
features: SNR and Tx power for clustering. 

We use the Pearson correlation coefficient~\cite{benesty2009pearson}
as the similarity metric to compare two devices' SNR and Tx power
values. We choose the Pearson correlation coefficient because it
  measures the linear correlation between two vectors. When a
  maintenance issue occurs, the impacted devices' PNM metrics will
  increase or decrease simultaneously to adapt to the changed network
  conditions.  Pearson coefficient can capture the synchronized
  changes well regardless of the absolute values of two devices' PNM
  metrics, as the baseline values of different devices' Tx
  powers could differ significantly.

A new feature we uncover in this work is a feature vector that encodes
the missed data collection points. \sys's data pre-processing module
can infer which data points each device misses.  Many factors can
cause data missing, \eg, the loss of a \pnmdata collection request or
response, a faulty device, or a network outage.  Intuitively, if a
group of devices simultaneously miss data collection points, it could
indicate a maintenance issue. Led by this insight, we generate a
binary feature vector with 0 indicating a missing data collection
point.  We find that using this feature vector (referred to as missing
hereafter) as a clustering feature leads to a high maintenance
ticketing rate ratio. So we also include missing as a clustering
feature.  We use one minus the normalized hamming distance as the
similarity metric between two devices' missing features. Normalized
hamming distance is a metric that computes the distance between two
binary strings~\cite{pappalardo2009hamfast} and is suitable for
comparing the similarity of two binary feature vectors.



\if 0

In \pnmdata, we have three types of metrics: instantaneous metrics,
accumulated metrics, and missing. Instantaneous metrics, such as Tx
Power, Rx Power, and SNR, indicate the status of the devices when the
\pnmdata is collected. Accumulated metrics, such as the number of
codewords and the number of timeouts, indicate the average status of
the device between the data collection interval, which is roughly 4
hours in this work. Missing is not included in original \pnmdata. We
infer the missing data points using the algorithm described in
\S~\ref{sec:infer}. Although missing data can be caused by various
factors, it can indicate whether the device is available in a cable
broadband network when the \pnmdata is collected.

\textbf{Instantaneous Metrics:} Operational experiences and previous
research~\cite{hu2020cablemon, hu2022characterizing} show that SNR, Tx
power, and Rx power are valuable features for detecting network
anomalies. We further find that Tx power and Rx power are highly
correlated. Therefore, we use two metrics in a device's telemetry
data: SNR and Tx power to distinguish a maintenance issue from a
service issue. Figure~\ref{f:tx_power_examples} shows an example on
how Tx power will change if a network issue happened. When a
maintenance issue affects multiple devices, the instantaneous values
of these devices' Tx power change in the same direction, demonstrating
a strong linear correlation between the time series data.  \zzy{We may
  need to unify "telemetry data" and "PNM data", maybe simply change
  all spots to \pnmdata.}

Therefore, we use the Pearson correlation coefficient~\cite{benesty2009pearson} as the similarity metric to compare the instantaneous metrics. This design has several advantages. When a maintenance issue occurs, all impacted devices' instantaneous metrics values change synchronously. Pearson correlation captures synchronized changes well. In addition, different devices have different baseline values of Tx values. Pearson correlation is indifferent to the baseline values. In contrast, other similarity metrics such as the Euclidean distance require data normalization when the baseline values of different devices are different. Moreover, due to missing data, different devices may have different numbers of data points during the same comparison period. Therefore, after the data pre-processing step, when we compute the pair-wise similarity metrics for different pairs of devices, we may use different numbers of data points. Pearson correlation coefficient is again robust in this situation. When two devices are affected by the same maintenance issue, their Pearson correlation coefficient is not sensitive to the length of input data points.

\textbf{Accumulated Metrics:} Accumulated metrics, in contrast, are unsuitable for the clustering algorithm. The values of accumulated metrics are highly correlated with customer behaviors. Suppose a customer never used the cable broadband network during the last 4 hours. In that case, both codeword counters and timeout counters will be almost 0, which will be significantly different compared to the customer who watched three movies. Even if these two customers' home devices are affected by the same network outage, we may not observe any similarity from the accumulated metrics collected from these two devices. Therefore, in this work, we choose not to use accumulated metrics as features in the clustering algorithm.

\textbf{Missing:} Recall that we infer missing data points based on the length of the average data collection interval in the data pre-processing step (\S~\ref{sec:infer}).  Many factors can cause data missing, \eg, the loss of an SNMP request or response, a faulty device, or a network outage. We find that if a group of devices simultaneously miss data collection points, it is often an indicator of a network maintenance issue. Therefore, we use missing data as another feature for diagnosing a maintenance issue.

Since we do not have missing data as a field in a device's input time series data, we generate a missing feature vector for each device as follows. After applying the missing data inference step in \S~\ref{sec:infer}, for each device's input time series data, if there is a valid data point at this timestamp, we set the corresponding element value in the device's missing feature vector to 0. Otherwise,  the timestamp is an inserted placeholder, and we set the value to 1. We use $1 - \texttt{normalized\_hamming\_distance}$ as the similarity metric between two devices' missing vectors, as normalized hamming distance is commonly used to calculate the distance between two binary strings~\cite{pappalardo2009hamfast}. 

\fi


%



\subsection{Fault Detection}
\label{sec:anomaly-detection}

After \sys's clustering module identifies groups of devices sharing
similar \pnmdata patterns, it invokes a fault detection module on each
cluster (which could contain a single device), as shown in
Figure~\ref{f:design_overview}. When the module detects an anomaly
  for any device in a cluster, it flags the entire cluster as
  anomalous.

Fault detection is relatively independent of \sys's data
pre-processing and clustering modules. \sys adopts the
state-of-the-art fault detection techniques proposed
in CableMon~\cite{hu2020cablemon}, as they can detect infrastructure faults
using \pnmdata without labeled training data nor static
thresholds.  
CableMon~\cite{hu2020cablemon} detects network faults with dynamic thresholds on \pnmmetrics of a device's \pnmdata.
A device is detected with an anomaly if there are \pnmmetrics below or beyond the corresponding threshold.
CableMon elaborately selects these thresholds maximizing the ticketing rate (\S~\ref{sec:cablemon}) such that the detected anomalies are associated with the most customer tickets, which implies larger impacts on user experience and more operational costs.  \zzy{\sys treats these fault detection thresholds as hyper-parameters auto-tuned by CableMon.}
However, the techniques themselves cannot differentiate
the types of faults. 
\sys can incorporate any fault detection module
that meets its design goals listed in \S~\ref{sec:overview}.


%
%

\subsection{Data Pre-Processing}
\label{sec:pre-processing}


%

To compare the similarity of two devices' \pnmdata, it is desirable to
compare data points collected closest in time together.  However, the
PNM infrastructure uses SNMP to collect data, which is unreliable. We
observe many missed and duplicated data points in our dataset.  In
addition, in the same data collection epoch, different devices respond
at different times to avoid congestion.  So the timestamps of
different devices' data points could span a wide range even in the
same data collection epoch.

\begin{figure*}[t!]
\centering \subfigure[]{
  \includegraphics[width=.3\textwidth]{./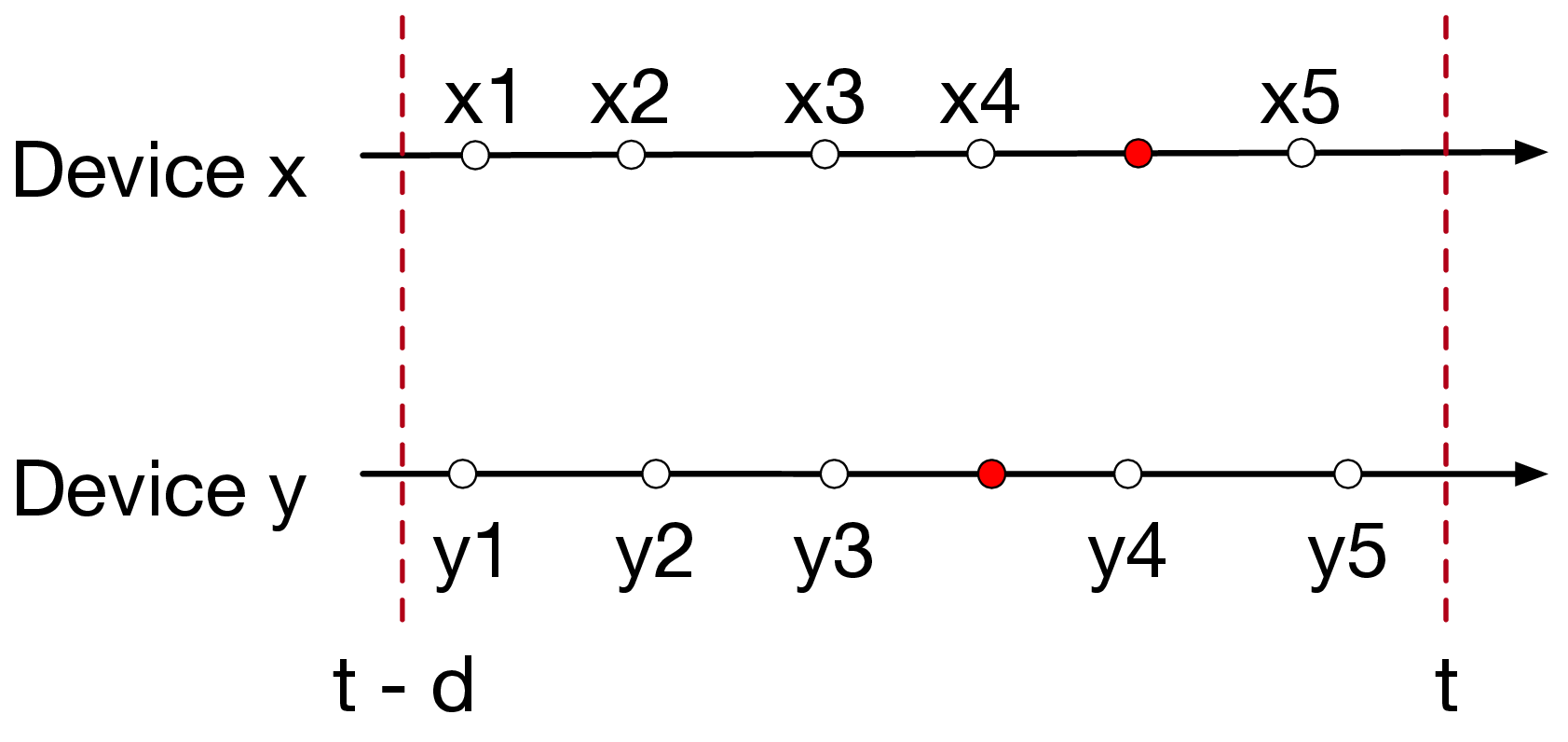}
    \label{f:align_a}
    }
    \subfigure[]{
    \includegraphics[width=.3\textwidth]{./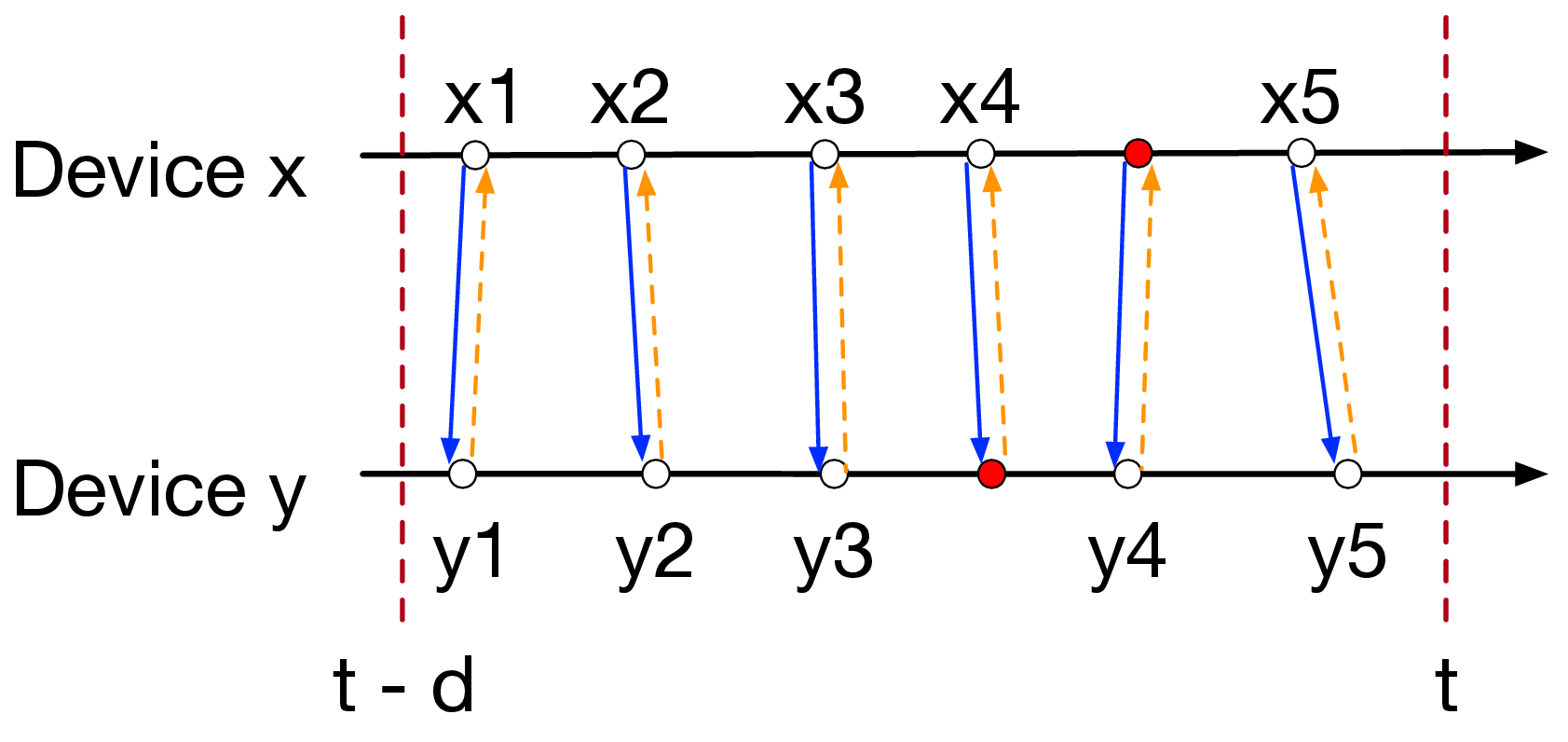}
    \label{f:align_b}
    }
    \subfigure[]{
    \includegraphics[width=.3\textwidth]{./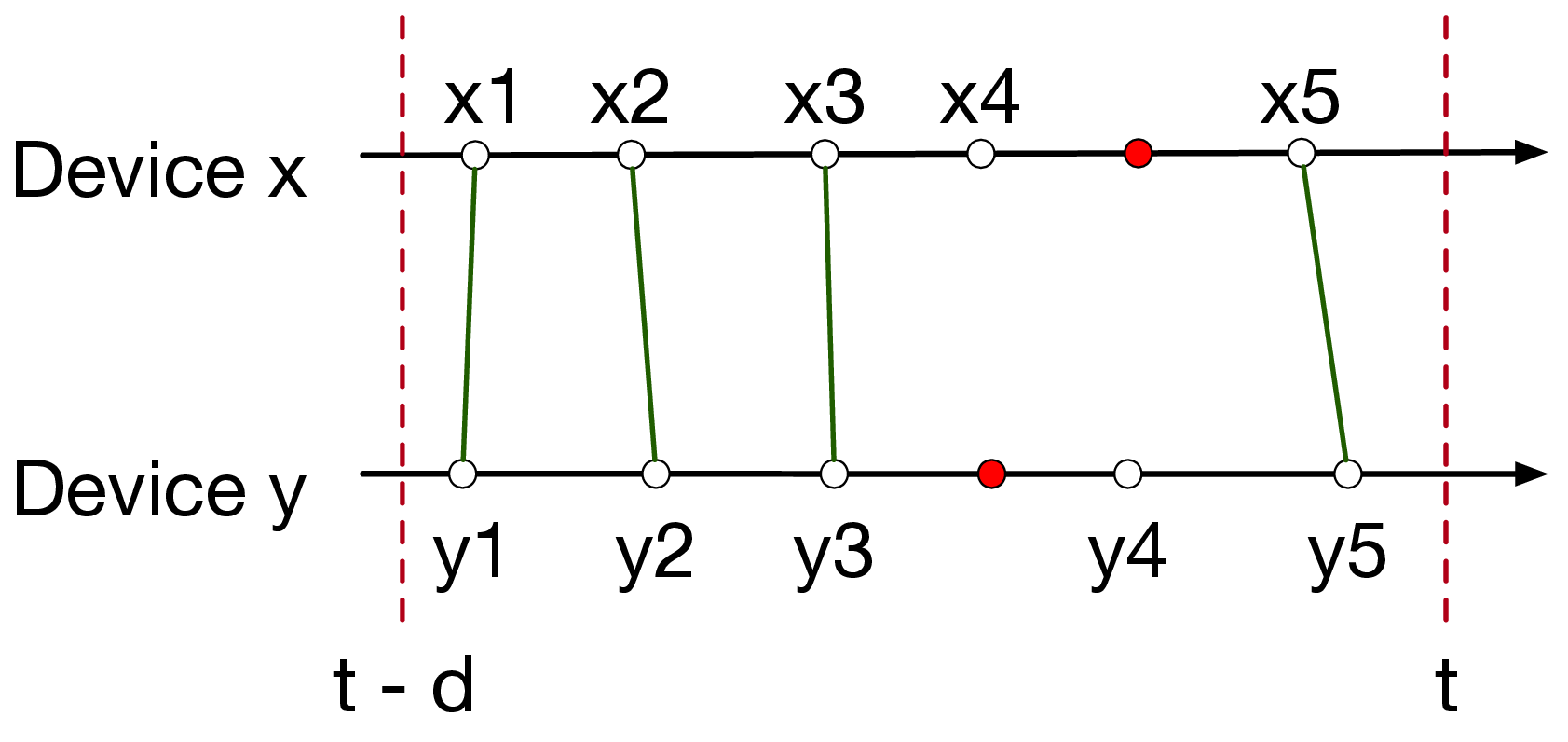}
    \label{f:align_c}
    }
    \vspace{-10pt}
    \caption{{\small{\bf This figure shows the steps in \sys's data
          pre-processing algorithm.}}\label{f:align}}
\end{figure*}
Concretely, we depict the distribution of the time intervals between a
device's two adjacent data points for all devices in our dataset in
Figure~\ref{f:missing_interval}. As we can see, the modes of the
distribution are multiples of four hours, which is the default data
collection interval of \anonisp. If the interval between two adjacent
data points is close to eight, then it is highly likely that there is
a missed data point, and so on.  The bump near 0 indicates duplicated
data points.

However, there are many data collection intervals that have lengths
between two adjacent multiples of fours, as shown in
Figure~\ref{f:missing_interval}. For example, if two data points are
six hours apart, it could either be the case of a missed data point or
the case of a delayed data collection point.  If we cannot
differentiate these two cases, we may produce a suboptimal alignment
that does not compare data points collected closest in time together.

Figure~\ref{f:align_f} shows an example. Suppose \sys runs at time $t$
and its look-back window $(t-d, t]$ includes five data points from
  cable devices $x$ and $y$, respectively. If we greedily match each
  data point in one device to its closest data point in the other
  device, we will produce an alignment as shown in
  Figure~\ref{f:align_f}, where data point $x_4$ is aligned with data
  point $y_4$. However, a better alignment, shown in
  Figure~\ref{f:align_c},  is that $x_4$ and $y_4$ are not paired
  together so the collection time difference in each paired data
  point is less. 

We design a data pre-processing algorithm to align two devices' \pnmdata points such that it minimizes the time difference between any
  two aligned data points. This algorithm involves two steps. In the
first step, we infer which data collection points are missing as shown
in Figure~\ref{f:align_a}. The red dots are the missed data points
the algorithm infers.  We note that \anonisp's \pnmdata do not include
a sequence number for each PNM record nor the timestamp when a PNM
collection request is sent.  If other cable ISPs have such
information, they can skip this step, as such information makes a
device's missed data points explicit.  In the second step, we apply a
bijection function to align the data points collected from two devices
so that each data point in one device is paired with the closest data
collection point in the other device as shown in
Figure~\ref{f:align_b}. We keep the pairs with bi-directional
alignments as the final result, as shown in Figure~\ref{f:align_c}.


To infer missed data points, we use an offline algorithm to determine
a threshold $L_{missing}$\zzy{ as another hyper-parameter of \sys}. When the data collection interval between a
device's two adjacent data points exceeds the threshold, it is highly
likely that there is a missed data point.  We use this threshold to
determine a missing data point at the time of diagnosis. We note that
this threshold is a hyper-parameter, but we choose its values
programmatically using the operational knowledge: the default data
collection interval.

As we observe in Figure~\ref{f:missing_interval}, if we project the
data collection times of all devices in the same fiber optical node on
the timeline over a long duration, we will observe
distinct clusters corresponding to each data collection epoch, where
the distances between the centers of two adjacent clusters are
multiples of the default data collection interval. If the distance
between the centers of two clusters exceeds the default data collection
interval, it indicates that there is a missed data point for the
entire \FN. Within each cluster, if a particular device
does not have a data point in the cluster, it indicates that the
device misses a data point. With this knowledge, we can determine an
optimal threshold $L_{missing}$, such that if we use the threshold to
infer a device's missed data points, the results match the earlier
cluster-based inference results the best. We show the details of
  our algorithm that determine the optimal of $L_{missing}$ in
  Appendix~\ref{appendix:missing_threshold}.

We include the pseudo-code and more details of \sys's data
pre-processing algorithm in
Appendix~\ref{appendix:infer_missing},~\ref{appendix:missing_threshold},
and~\ref{appendix:align}.  We prove in Appendix~\ref{appendix:align}
that \sys's data pre-processing algorithm produces pair-wise aligned
data points between two devices that minimize the time difference for
each pair of aligned data.



We note that the PNM infrastructure collects \pnmdata for each
  upstream channel. \sys pre-processes each channel's \pnmdata and
  then for each feature, it concatenates the data points from each
  channel into one single feature vector. For instance, the data we
  have include three upstream channels. Each feature vector is the
  concatenation of pre-processed \pnmdata from all three channels.

\zcut{

\subsection{Hyperparameters and Configuration}
\label{sec:parameters}


\sys has a few hyper-parameters and ISP-specified configuration
parameters. We summarize its hyper-parameters and how to
programmatically set them. We also discuss the configuration
parameters and present simple values used in our implementation.
\sys uses the following hyper-parameters:

\begin{enumerate}[topsep=0.1ex,wide=0pt]
\setlength{\itemsep}{0pt}
\setlength{\parskip}{0pt}
\item \textbf{Clustering similarity threshold $s_f$ for each feature
  $f$ (\S~\ref{sec:clustering}):} It determines when the
  average-linkage hierarchical clustering algorithm should stop
  combining clusters.  \sys sets the value $s_f$ by
  finding the value of $s_f$ that maximizes the maintenance ticketing
  rate ratio as described in \S~\ref{sec:set_similarity_threshold}.
\item \textbf{Missing threshold $L_{missing}$
  (\S~\ref{sec:pre-processing}):} This threshold determines when \sys
  will insert a missing data point in its pre-processing phase
  (\S~\ref{sec:pre-processing}). It is determined by minimizing the
  mismatches between the missing data points inferred by this
  threshold and those inferred using an off-line clustering algorithm
  (\S~\ref{sec:pre-processing}, Appendix~\ref{appendix:infer_missing}, and
  Appendix~\ref{appendix:missing_threshold}).
\item \textbf{Fault detection thresholds:} \sys's fault detection
  module adopts the design of CableMon~\cite{hu2020cablemon}, which
  uses threshold-based detection. Similar to this work, CableMon sets
  those thresholds programmatically to optimize detection accuracy
  using metrics derived from customer tickets.
\end{enumerate}

\sys has the following ISP-specified configurations that \sys can use without tuning:

\begin{enumerate}[topsep=0.1ex,wide=0pt]
\setlength{\itemsep}{0pt}
\setlength{\parskip}{0pt}
 \item\textbf{$L$}: The length of the 
    data collection interval.
  \item\textbf{$N_{ch}$}: The number of
upstream channels from which an ISP collects \pnmdata.
\item\textbf{$d$}: The length of the
  look-back window.
\item\textbf{$C_{thr}$}: The minimal size of a cluster that an
   ISP deems significant enough to warrant a maintenance task.
\end{enumerate}

In this study, \anonisp collects \pnmdata every 4 hours ($L$), and all devices have 3 ($N_{ch}$) upstream channels. Because \anonisp intends to perform daily detection, we set $d = 1$ day. In this case, we obtain 6 data points in each channel, a total of 18 when
there is no data loss. We also varied $d$ from 1 to 7 days and did
not observe any significant performance difference, indicating a longer look-back window does not benefit the daily detection. We set $C_{thr}$
to 5 in our experiments per \anonisp's specification \zzy{of their physical networks, \ie, the majority of amplifiers serve more than 5 devices.}
\zcut{Under this setting, it is unlikely to misclassify service issues as a maintenance issue. A maintenance issue affecting a small group ($<5$) of devices could be detected as a service issues but the number of wrong dispatches to customers' home is limited.}





}

\zcut{
\section{Implementation and Deployment} 
\label{sec:implementation}

We implement \sys with around 2K lines of code using Python. \sys's
code consists of two parts: an offline training module and an online
diagnosis module.

Any cable ISP that collects \pnmdata can deploy \sys in the following
steps.  First, an ISP uses \sys's training module to determine its
hyper-parameters. The training module takes an ISP's historic PNM and
customer-ticket data collected from its cable networks as inputs. An ISP can
conduct this step periodically to cope with model drift.


After the training step, an ISP can deploy \sys in either a \proactive
or a \reactive. In the \proactive, an ISP periodically invokes \sys to
monitor its networks (\eg, at every data collection interval). Based on
\sys's diagnoses, an ISP  can proactively repair maintenance or service
issues before customers react to them. In the \reactive, an ISP
invokes \sys when it receives a customer call and employs \sys's
diagnosis to make a dispatch decision.


\if 0
data pre-processing and
clustering algorithms using Python. We use 
the python-louvain~\cite{python_louvain} library's default Louvain clustering algorithm. We also implemented DBSCAN, single-linkage hierarchical clustering, and complete-linkage hierarchical clustering using the scikit-learn~\cite{scikit-learn} library. We define the distance between two devices using this formula: $dis(x, y) = 1 - max(0, similarity(x, y))$, where the similarity between two devices is the similarity metric we discussed in~\S~\ref{sec:feature}. If the number of overlapping data points between $x$ and $y$ is less than $o$, we will set $dis(x,y) = 1$, indicating we cannot calculate the distance between these two devices. We also implemented CableMon with the help of authors of~\cite{hu2020cablemon}.
\fi

%
}

\section{Evaluation}
\label{sec:evaluation}

\sys's main goal is to separate maintenance issues from service issues. In this section, we evaluate how well \sys achieves this goal. 

\subsection{Establishing Evaluation Metric}
\label{sec:eval_method}

Central to \sys's design is a machine learning model that classifies a
device's state as healthy, experiencing a maintenance issue, or experiencing a
service issue.  It is challenging to evaluate its effectiveness as we
do not have the ground truth. To address this challenge and to scale 
our evaluation, we develop metrics based on customer tickets. 

Our assumption
here is that if we detect maintenance issues, from a statistical view, customers who suffer from those issues will report more maintenance tickets
compared to service tickets, and vice versa. Therefore, we define a normalized 
ticketing rate as a metric to evaluate \sys's
fault diagnosis accuracy.  Recall that in Eq~\ref{eq:rt_m} and
Eq~\ref{eq:rt_s}, we define two ticketing-rate variables
$R_{m,M}$ and $R_{m,S}$ as the number of maintenance
tickets averaged over the total length of data collection periods \sys
diagnoses as experiencing a maintenance and a service issue,
respectively. We can define ticketing rate variables $R_{s,M}$
and $R_{s, S}$ correspondingly where $s$ stands for service
tickets.

Because an ISP's fault diagnosis is inaccurate, we assume for each
type of ticket, there exists random errors. To discount those errors,
we define a baseline ticketing rate $R_{t}$ as the number of tickets
of type $t$ ($t$ is either maintenance or service tickets) averaged over all
devices in a fiber optical node and the entire data collection period.

A normalized ticketing rate $\overline{R_{t,E}}$ of a ticket
type $t$ and a diagnosis type $E$, where $E$ is either a
maintenance issue or a service issue is defined as
$\frac{R_{t,E}}{R_{t}}$. Assuming that an ISP receives 
much higher frequency of customer calls during a true maintenance or
service issue, if a fault diagnosis system can accurately
diagnose a maintenance or service issue, we will observe high 
normalized ticketing rates for both maintenance and service issues. In
contrast, if a system cannot effectively identify maintenance or
service issues, we would observe a normalized ticketing rate close to
1. 

To summarize, we define four normalized ticketing rates: 
$\overline{R_{m,M}}$, $\overline{R_{s,M}}$, 
$\overline{R_{m,S}}$, and $\overline{R_{s,S}}$. For an
algorithm that can detect service or maintenance issues from \pnmdata, we should observe the following 4 invariants:
\begin{gather*}
    \overline{R_{m,M}} > \overline{R_{s,M}} > 1, \quad
    \overline{R_{m,M}} > \overline{R_{m,S}} > 1, \\
    \overline{R_{s,S}} > \overline{R_{m,S}} > 1, \quad 
    \overline{R_{s,S}} > \overline{R_{s,M}} > 1
\end{gather*}
and high value of $\overline{R_{m,M}}$ and 
$\overline{R_{s,S}}$ indicates better performance.


\subsection{Comparing with PNM Best Practice}
\label{sec:compare_to_pnm}
The official PNM document from CableLabs~\cite{cablelabs2016docsis3}
introduces an algorithm that uses each \pnmdata point's
pre-equalization coefficients as a feature vector for clustering
devices impacted by the same linear RF distortion, which can be
caused by either a service issue or a maintenance issue. Therefore,
as a comparison, we implement this algorithm and set its clustering 
similarity threshold in the same way as we tune those thresholds for \sys's
features, and compare the normalized ticketing rate defined in \S~\ref{sec:eval_method}.

To set up the comparison experiment, We split the 14-month \pnmdata we have into two sets: an 11-month training set (from Jan 2019 to Nov 2019) and a 3-month test set (from Dec 2019 to Feb 2020). We use the training set to determine the values of \sys’s hyper-parameters \zcut{(\S~\ref{sec:parameters}) }
and the test set to evaluate \sys’s performance.

\begin{figure}[!t]
\centering
  \includegraphics[width=0.8\linewidth]{./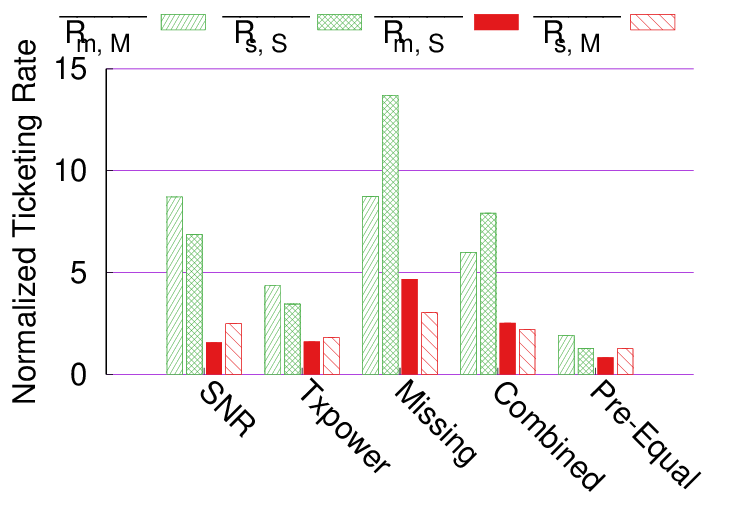}
  \vspace{-10pt}
  \caption{\small\textbf{The normalized ticketing rates for various
        combinations of 
        ticket type and diagnosis result, where $m$ denotes
        maintenance tickets, $s$: service tickets, $M$:
        \sys-diagnosed maintenance issues, and $S$:
        \sys-diagnosed service issues.}}\label{f:our_ticket_density}

\end{figure}

Figure~\ref{f:our_ticket_density} shows the results when we run \sys
on the test dataset. We show the ticketing rates for faults detected
by each independent feature as well as the ``Combined'' result
detected using all \sys's features.  For each individual metric, SNR, Tx Power and Missing, both the normalized ticketing
rates $\overline{R_{m,M}}$ and $\overline{R_{s,S}}$, which
suggests accurate diagnoses, are much higher than $\overline{R_{m,S}}$ and $\overline{R_{s,M}}$ for all \sys's features. For the missing feature, the normalized service ticketing rate during a \sys-diagnosed service issue is as high as $13$, suggesting that missing data points
are highly predictive of service issues. We also calculate the combined normalized ticketing rates as the combination of the three metrics \sys used, and it shows a good performance as well.

    In contrast, for pre-equalization coefficients, we find that all
    combinations of normalized ticketing rates are slightly above 1,
    suggesting that they are not effective in detecting or diagnosing
    customer-reported faults. Our explanation is that those
    coefficients are designed to compensate for signal distortions in
    cable networks. They 
    detect distortions that are
    already compensated for, but are not effective in signaling
    un-compensatable anomalies that lead to customer tickets.

    We also compute the normalized ticketing rates for both
    maintenance and service issues when a device is healthy. All
    values of those normalized ticketing rates are between $[0.891,
      1)$, suggesting that \sys correctly identifies healthy
      networking conditions. The normalized ticketing rates in healthy
      periods are less than 1 because the healthy periods have fewer
      than average tickets. We do not show them in
      Figure~\ref{f:our_ticket_density} for clarity.



\begin{figure}[!t]
\centering
  \includegraphics[width=0.8\linewidth]{./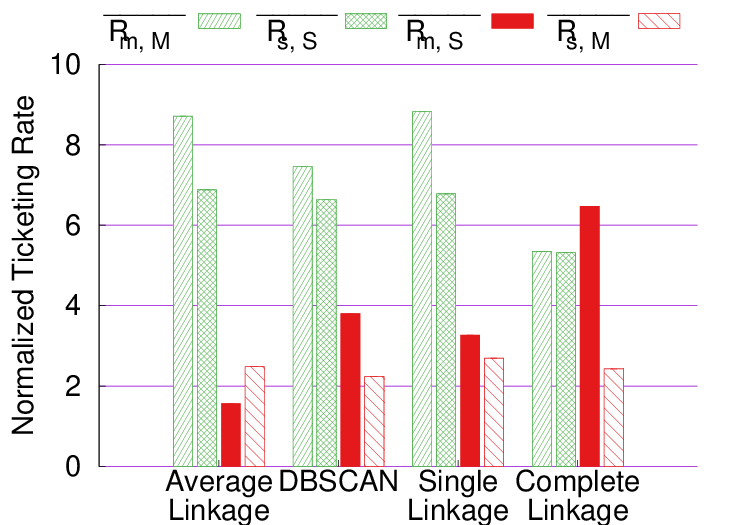}
  \caption{\small\textbf{The normalized ticketing rates of SNR for various cluster algorithms of ticket type and diagnosis result.}}\label{f:cluste_snr_density}
\end{figure}


\subsection{Comparing with Different Choices}
\label{sec:compare_choice}
In the design of \sys, we design a unique preprocessing method that includes missing position inference and time series data alignment. Meanwhile, \sys adopts the average linkage hierarchical clustering algorithm to cluster the cable devices that share the same abnormal patterns. In this subsection, we compare our design choice with other popular algorithms, to show our design choice outperforms
other algorithms. For the clustering algorithm, we compare our choice with three alternatives: DBSCAN, single-linkage, and complete-linkage clustering algorithms. For the preprocessing, we contrast our choice with resampling, a popular preprocessing method for irregularly sampled time series data. 

For each comparison, we replace our choice with alternatives and run \sys to compare the normalized ticketing rate defined in \S~\ref{sec:eval_method}. Besides, in order to validate the performance of a clustering algorithm, we manually labeled a small set of  \pnmdata
in a format similar to Figure~\ref{f:tx_power_examples} as the ground truth of this evaluation. We use
customer tickets to locate the time periods where faults are likely to
occur. If a group of devices (with group size exceeding $C_{thr}$)
shows a common anomalous pattern, we label this group of devices as
experiencing a maintenance issue. If there are only one or a few devices are affected, we label it as a service
issue. We label all devices that show no anomalous patterns as
healthy.  We learned from \anonisp that this
process resembles how they manually diagnose a network anomaly. 

We started the inspection by choosing  50 maintenance
tickets and used
  the tickets' start and close time to guide the search for anomalous
  patterns. We were able to obtain 16 groups of maintenance issues that
  impact nearly 700 devices. Since we must inspect all devices sharing
  the same fiber optical node for each maintenance ticket, we were also able to identify
  113 devices that were affected by service issues. We carefully
  verified the labeling results with \anonisp's experts to guarantee
  the labeling accuracy.
  
This manual labeling process is cumbersome and error-prone.  It
took two-person-week to obtain these labels.  We intentionally did not
expand into more labels to ensure the labeling accuracy. We
  note that this manually labeled set covers only a small fraction of
  anomalous patterns and is not suitable for training a high-quality
  classifier.
  
We run the clustering algorithm on \pnmdata we label and compare their cluster
results with our labeled results.  We choose two widely used metrics for evaluating each clustering algorithm: the Rand
Index (RI)~\cite{rand1971objective} and the Adjusted Rand Index
(ARI)~\cite{hubert1985comparing}.

We compute RI by comparing the partitions produced by a clustering
algorithm with the ground truth partition.  If two devices are in the
same cluster in both partitions, we count it as a true positive
($TP$).  Conversely, if two devices are in the same subset in the
partition produced by a clustering algorithm, but they are in
different subsets in the ground truth partition, we count it as a
false positive ($FP$). True negatives ($TN$) and false negatives
($FN$) are defined accordingly. RI computes the fraction
of true positives and negatives divided by the total pairs of devices:
$\frac{TP+TN}{TP+TN+FP+FN}$. Its maximum value is 1. The higher the
RI, the better the clustering result.  ARI adjusts for the random
chances that a clustering algorithm groups two devices in the same
cluster by deducting the expected RI ($\mathbb{E}(RI)$) of a random
partition: $\frac{RI-\mathbb{E}(RI)}{1-\mathbb{E}(RI)}$.

\paragraph{\textbf{Clustering Algorithm}}

Within the architecture of \sys, the average linkage hierarchical clustering algorithm is employed to categorize devices affected by the same network anomaly. A salient challenge is the indeterminacy of the distinct pattern count. Given this inherent uncertainty, clustering algorithms that mandate the specification of the number of clusters, represented by the hyper-parameter $k$, are inherently inconsistent with our design objectives. In the context of \sys, it is imperative to employ algorithms capable of discerning the optimal number of clusters autonomously, circumventing the limitations presented by the need for predefined cluster counts. Therefore, we compare \sys's clustering algorithm choice with three popular clustering algorithms that are not contingent on the predefined $k$ value, including DBSCAN, single-linkage, and complete-linkage clustering algorithms.

\begin{table}[b!]
\scriptsize
\centering
\begin{tabular}{c c c c c}
                    & \makecell[c]{Average \\ Linkage} & DBSCAN & \makecell[c]{Single \\ Linkage}  & \makecell[c]{Complete \\ Linkage} \\ \toprule
RI                  & 0.91  & 0.84 & 0.83 & 0.83 \\ \midrule
ARI                 & 0.83  & 0.65 & 0.64 & 0.66 \\ 
\end{tabular}
\caption{\small\textbf{Rand Index and Adjusted Rand Index for various clustering algorithms.}\label{tab:manual_label_result}}
\end{table}

Table~\ref{tab:manual_label_result} shows the comparison results. \sys
achieves an RI of 0.91 and an ARI of 0.83, respectively. \sys's choice outperforms other clustering algorithms. 

Figure~\ref{f:cluste_snr_density} 
shows the normalized ticketing rate for SNR\zzy{ (the result of Tx Power is similar but skipped due to space limitation)}. The figure shows that the average linkage hierarchical clustering algorithm achieves the highest $\overline{R_{m,M}}$ and $\overline{R_{s,S}}$, which suggests the best performance. It is pertinent to note that the comparative analysis does not encompass the normalized ticketing rate for missing data. This exclusion is attributed to the observation that various clustering algorithms exhibit analogous performance metrics in this dimension. The uniformity in the normalized ticketing rate for missing data across different algorithms is caused by the characteristic that the missing vectors are not subjected to irregular sampling, and the missing vector is a 0-1 vector, rendering the comparative distinctions negligible.

\paragraph{\textbf{Preprocessing}}

\begin{figure}[!t]
\centering
  \includegraphics[width=0.8\linewidth]{./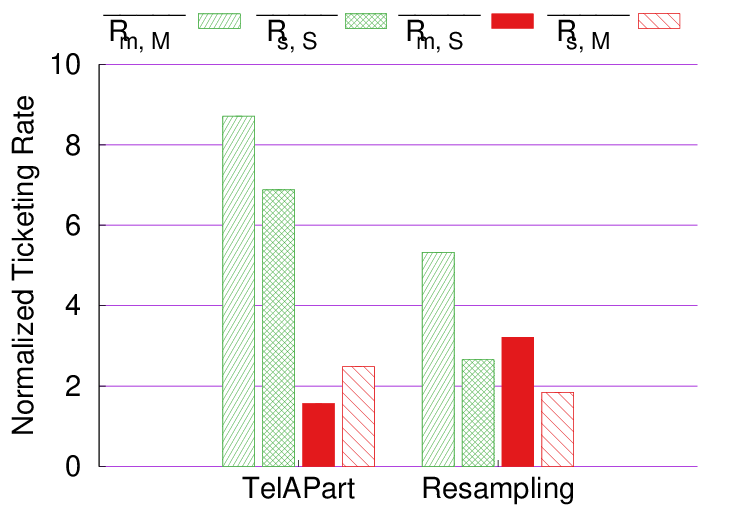}
  \vspace{-10pt}
  \caption{\small\textbf{The normalized ticketing rates of SNR for two preprocessing algorithms of ticket type and diagnosis result.}}\label{f:cluste_snr_density_preprocessing}
\end{figure}


A common method to handle irregularly sampled time series data involves its transformation into uniformly spaced intervals through resampling~\cite{steven2020learning}. Each time series is systematically restructured onto a consistent grid, necessitating the application of interpolation techniques or, in instances of noise prevalence, regression models to approximate the inherent continuous temporal dynamics.

We employ a classic resampling technique to process the \pnmdata, utilizing linear interpolation to transform the data into a uniformly spaced 4-hour interval. This method effectively aligns the data points and mitigates the presence of missing data, resulting in a coherent and complete dataset. We then run the clustering algorithm \sys adopts to demonstrate the effectiveness of our preprocessing algorithm.

\begin{table}[b!]
\scriptsize
\centering
\begin{tabular}{c c c}
                    & \makecell[c]{\sys's\\ Preprocessing} & Resampling \\ \toprule
RI                  & 0.91  & 0.74  \\ \midrule
ARI                 & 0.83  & 0.35  \\ 
\end{tabular}
\caption{\small\textbf{Rand Index and Adjusted Rand Index for two preprocessing methods}\label{tab:preprocessing_compare}}
\end{table}

Table~\ref{tab:preprocessing_compare} shows the comparison results. \sys achieves an RI of 0.91 and an ARI of 0.83, respectively. \sys's preprocessing algorithm choice outperforms the classic resampling algorithm.  

Figure~\ref{f:cluste_snr_density_preprocessing} 
shows the normalized ticketing rate for SNR\zzy{ (the result of Tx Power is similar but skipped due to space limitation)}. The figure reveals that the preprocessing algorithm employed within \sys outperforms alternative methods, as evidenced by the elevated values of $\overline{R_{m,M}}$ and $\overline{R_{s,S}}$.

\begin{figure*}[!t]
\minipage{0.32\textwidth}
  \includegraphics[width=\linewidth]{./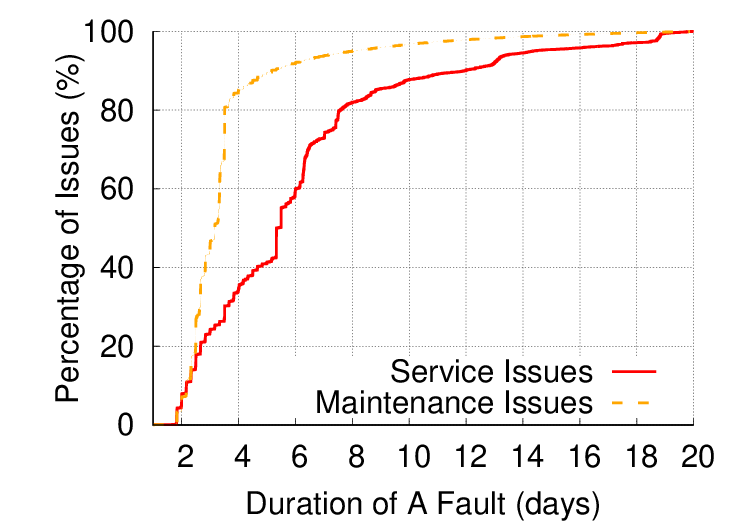}
  \caption{\small\textbf{The cumulative distribution of how long a
        maintenance (or service) issue lasts.}}\label{f:network_issue_existing_time}
\endminipage\hfill
\minipage{0.32\textwidth}
  \includegraphics[width=\linewidth]{./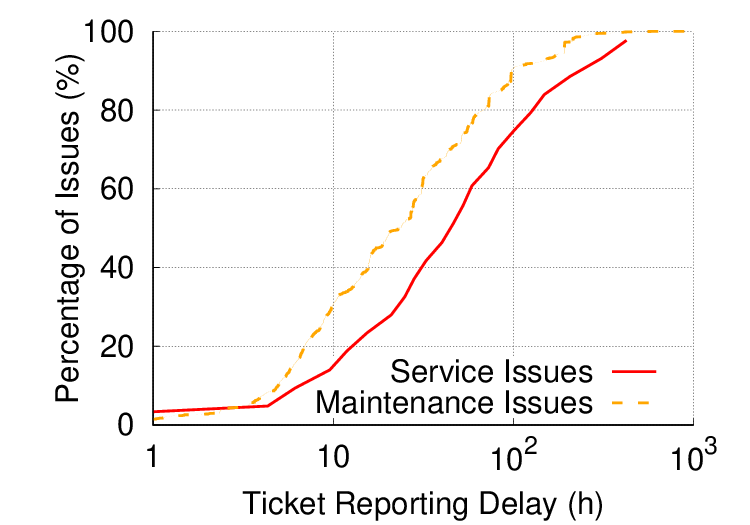}
  \caption{\small\textbf{The cumulative distributions of ticket
        reporting delays for \sys-diagnosed maintenance and service issues.}}\label{f:report_waiting_time}
\endminipage\hfill
\minipage{0.32\textwidth}
  \includegraphics[width=\linewidth]{./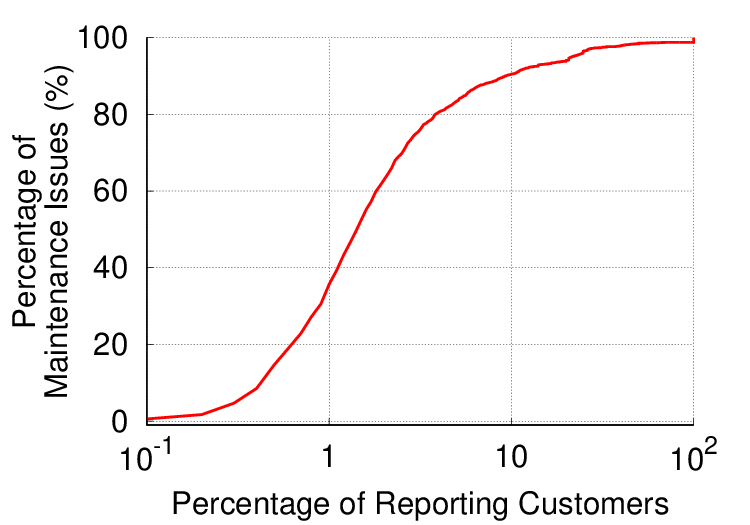}
  \caption{\small\textbf{The cumulative distributions of reporting customers
        when a maintenance issue occurs.}}\label{f:ticket_report_rate}
\endminipage\hfill
\end{figure*}

\subsection{Fault Characteristics}

An additional source of truth we have is how \anonisp processes
customer calls and the definitions of maintenance and service issues.
Due to the lack of any other source of truth, we use this operational
knowledge to validate \sys's diagnosis accuracy. 

\if 0
\paragraph{Diagnosed Results vs. Ticket Types} \anonisp
initially considers all tickets as service tickets. Because of this
practice, if \sys's diagnoses are correct, we will observe many
service tickets occur in \sys-diagnosed maintenance issues but few
service tickets in \sys-diagnosed maintenance issues. To test this
hypothesis, we run \sys at each time point when a ticket arrives in
our test  dataset. We then compare the ISP-labeled ticket
type with \sys's diagnosis. 
\begin{table}[b!]
\centering
\begin{tabular}{cccc}
&             & \multicolumn{2}{c}{ISP's Ticket Label}          \\ \cline{3-4} 
&             & $m$ & $s$ \\ \cline{2-4} 
\multicolumn{1}{c}{\multirow{2}{*}{\makecell[c]{\sys's \\ Diagnosis}}} & $M$ & $38.1\%$        &  $27.4\%$    \\ \cline{2-4} 
\multicolumn{1}{l}{}                                                                        & $S$     & $4.7\%$        & $30.8\%$     \\ 
\end{tabular}
\caption{\small\textbf{Comparing \sys's diagnosis results with
    \anonisp's ticket labels. The symbols $m$, $M$, $s$, and
    $S$ are the same as in Figure~\ref{f:our_ticket_density}.
Each percentage is computed as the tickets falling in that quadrant divided by all tickets in the
 test dataset.}\label{t:ticket_type}}
\end{table}
Table~\ref{t:ticket_type} shows the comparison result. 
We observe that
among all tickets in our test dataset, 27.4\% of them are diagnosed by \sys as maintenance issues but labeled as service tickets by \anonisp, constituting 47.1\% (=$\frac{27.4}{27.4+30.8}$) of all service tickets. And only $4.7\%$ of the tickets are diagnosed as service issues but labeled as maintenance tickets, constituting 11.0\% (=$\frac{4.7}{4.7+38.1}$) of all maintenance tickets.
This result corroborates the hypothesis
that \sys's diagnosis is accurate.

If we assume \sys's diagnosis of maintenance issues is accurate, then
all technicians \anonisp dispatched for the service tickets in the
upper-right quadrant in Table~\ref{t:ticket_type} are unnecessary, as
they cannot fix maintenance issues.  \anonisp annotates a customer
ticket with a field indicating whether a dispatch is made or not. We
use this field to count all such unnecessary dispatches in the test
dataset. We find that if \anonisp had used \sys for fault diagnosis
when it received a customer call, it could have avoided 38.52\% of all
its dispatches.

\fi

\paragraph{\textbf{Fault Duration}} By \anonisp's
definition of the maintenance cluster size threshold, a maintenance
issue impacts more customers than a service issue. Therefore, more
customers are likely to call an ISP when a maintenance issue occurs
than when a service issue occurs.  If \sys correctly differentiates a
maintenance issue from a service issue, we would observe that the
delay between when a maintenance issue occurs and when the first
customer calls to be shorter than that when a service issue occurs. As
a result, a maintenance issue is likely to be fixed sooner than a
service issue.


To test this hypothesis, we compute the duration of a \sys-diagnosed
fault as follows. We run \sys in \proactive on the test dataset and
record the first time a fault is detected until the time the fault is
no longer detected.  In Figure~\ref{f:network_issue_existing_time}, we
depict the cumulative distribution of the fault duration for
maintenance and service issues, respectively. As can be seen, 80\% of
the maintenance issues are fixed in 84 hours (3.5 days), while 80\% of the
service issues are fixed in 182 hours (7.6 days). This result further indicates
\sys's diagnosis is accurate.



\paragraph{\textbf{Ticket Reporting Delay}}

Figure~\ref{f:report_waiting_time} shows the distributions of ticket
reporting delays for maintenance issues and service issues diagnosed
by \sys, respectively.  We do not differentiate the first arrival
ticket's type for a \sys-diagnosed issue, as it may be mis-labeled by
an operator. We observe that when the ticket reporting delay exceeds a
few hours, there is a significant difference between the ticket
reporting delay distribution for maintenance issues and that for
service issues. For example, for more than 50\% of the maintenance
issues diagnosed by \sys, the first ticket arrives within 24 hours. In
contrast, only for less than 31\% of the service issues diagnosed by
\sys, the first ticket arrives within 24 hours. This difference again
suggests that \sys is able to separate maintenance issues from service
issues effectively. Interestingly, when the ticket
reporting delay is short, the ticket reporting delay distributions for
maintenance and service issues overlap. We hypothesize that these
tickets are impacted by severe issues  so that any subscriber impacted
by one of these issues reports immediately. 

\subsection{User Ticketing Behavior}

We are interested in studying what percentage of customers will make a
trouble call when a customer-impacting maintenance issue occurs. This
study does not serve the purpose of evaluation, but offers useful
information to researchers and network operators. For this purpose, we
examine each maintenance issue diagnosed by \sys during the span of the
entire dataset and measure the size of the anomalous cluster \sys
detects. We then count the number of customer tickets coinciding with
the duration of the maintenance issue and divide it by the size of the cluster.

Figure~\ref{f:ticket_report_rate} shows the cumulative distribution of
the percentage of reporting customers for a maintenance issue. For
more than 90\% of the maintenance issues, 90\% of the impacted
customers will not make a call. The low percentage of
reporting customers suggests that it is important for ISPs to
periodically monitor their networks and proactively repair network
impairments to improve the quality of experience for their
customers. ISPs can invoke \sys in the \proactive periodically to
achieve this goal.

\zcut{\paragraph{Summary} We evaluate \sys's diagnosis accuracy using a
small set of manually labeled dataset, normalized ticketing rates, and
operational knowledge. All evaluation methods suggest that \sys can
accurately identify maintenance issues and service issues. This
promising result has prompted \anonisp to incorporate \sys into their
\pnmdata pipeline.}

\zcut{
\subsection{Field Test}

To demonstrate the effectiveness of \sys, we collaborate with \anonisp to perform a field test. \sys is wrapped as a service deployed with Docker that can be called by \anonisp's field team. We elaborately designed the API of the \sys service such that the service can be incorporated into \anonisp's workflow. 
We received a summary but no details from the ISP, confirming \sys's effectiveness: ``(we) evaluated the performance of the clustering methodology produced by (your) team and found it effective at the task of classifying defects as service or maintenance''\footnote{Sensitive names are hidden for anonymity.}.
}

\section{Conclusion}
\label{sec:conclusion}

Cable ISPs can benefit from accurate and automated fault diagnosis for
reducing operational costs caused by erroneous fault diagnoses.  This
work addresses a fault diagnosis problem present in cable broadband
networks, \ie, how to distinguish a maintenance issue from a service
issue.  We develop \sys, a system that uses the telemetry data readily
available in cable broadband networks to automatically diagnose the
type of fault. In \sys's design, we combine unsupervised learning,
anomaly detection, and optimization techniques to eliminate the need
for manually labeled training data and hand-tuning
hyper-parameters. We also develop data pre-processing techniques to
enable machine learning models on \pnmdata, which are unreliably
collected and contain missing, duplicated, and unaligned data points.
We use a small set of manually labeled data and customer ticket
statistics to evaluate \sys. The evaluation results show that when
compared to the labeled data, \sys achieves a Rand Index of
0.91. During \sys-diagnosed maintenance (or service) events, a much
higher than average frequency of maintenance (or service) tickets
occur, further suggesting that \sys can effectively distinguish
maintenance issues from service issues. The cable ISP we collaborated with has confirmed the effectiveness of \sys with field tests.





\bibliographystyle{IEEEtran}
\bibliography{paper_ton}

\begin{thebibliography}{10}

\bibitem{bbc2020news}
{Internet: Old TV Caused Village Broadband Outages for 18 months}.
\newblock \url{https://www.bbc.com/news/uk-wales-54239180}, 2020 (visited on
  04/25/2023).

\bibitem{benesty2009pearson}
Jacob Benesty, Jingdong Chen, Yiteng Huang, and Israel Cohen.
\newblock {Pearson Correlation Coefficient}.
\newblock In {\em Noise reduction in speech processing}, pages 1--4. Springer,
  2009.

\bibitem{cablelabs2016docsis3}
DOCSIS CableLabs.
\newblock {Best Practices and Guidelines, PNM Best Practices: HFC Networks
  (DOCSIS 3.0)}.
\newblock Technical report, CM-GL-PNMP-V03-160725, 2016.

\bibitem{ester1996density}
Martin Ester, Hans-Peter Kriegel, J{\"o}rg Sander, Xiaowei Xu, et~al.
\newblock {A Density-based Algorithm for Discovering Clusters in Large Spatial
  Databases with Noise}.
\newblock In {\em kdd}, volume~96, pages 226--231, 1996.

\bibitem{ferreira2020convolutional}
Jude Ferreira, Maher Harb, Karthik Subramanya, Bryan Santangelo, and Dan Rice.
\newblock {Convolutional Neural Networks for Proactive Network Management}.
\newblock In {\em SCTE Cable-Tec Expo}, 2020.

\bibitem{goodfellow2016deep}
Ian Goodfellow, Yoshua Bengio, and Aaron Courville.
\newblock {\em {Deep Learning}}.
\newblock MIT press, 2016.

\bibitem{ron2020full}
Ron Hranac.
\newblock {Full Band Capture Revisited}.
\newblock In {\em SCTE Cable-Tec Expo}, 2020.

\bibitem{hu2022characterizing}
Jiyao Hu, Zhenyu Zhou, and Xiaowei Yang.
\newblock {Characterizing Physical-Layer Transmission Errors in Cable Broadband
  Networks}.
\newblock In {\em 19th USENIX Symposium on Networked Systems Design and
  Implementation (NSDI 22)}, pages 845--859, 2022.

\bibitem{hu2020cablemon}
Jiyao Hu, Zhenyu Zhou, Xiaowei Yang, Jacob Malone, and Jonathan~W Williams.
\newblock {CableMon: Improving the Reliability of Cable Broadband Networks via
  Proactive Network Maintenance}.
\newblock In {\em 17th USENIX Symposium on Networked Systems Design and
  Implementation (NSDI 20)}, pages 619--632, 2020.

\bibitem{hubert1985comparing}
Lawrence Hubert and Phipps Arabie.
\newblock {Comparing Partitions}.
\newblock {\em Journal of classification}, 2(1):193--218, 1985.

\bibitem{anonisp2023private}
Anonymous ISP.
\newblock Personal communication.

\bibitem{keller2019proactive}
Joe Keller and Sam Plant.
\newblock {Proactive Customer Engagement}.
\newblock In {\em SCTE Cable-Tec Expo}, 2019.

\bibitem{lartey2017proactive}
Franklin Lartey.
\newblock {Proactive Network and Technical Facilities Monitoring Using
  Standardized Scorecards}.
\newblock In {\em SCTE Cable-Tec Expo}, 2017.

\bibitem{lavalle2004relationship}
Steven~M LaValle, Michael~S Branicky, and Stephen~R Lindemann.
\newblock {On the Relationship between Classical Grid Search and Probabilistic
  Roadmaps}.
\newblock {\em The International Journal of Robotics Research},
  23(7-8):673--692, 2004.

\bibitem{steven2020learning}
Steven Cheng-Xian Li.
\newblock {\em {Learning from Irregularly-Sampled Time Series}}.
\newblock PhD thesis, University of Massachusetts Amherst, 2020.

\bibitem{milley2019proactive}
Andrew~Joseph Milley.
\newblock {Proactive Customer Maintenance}.
\newblock In {\em SCTE Cable-Tec Expo}, 2019.

\bibitem{pappalardo2009hamfast}
Francesco Pappalardo, Cristiano Calonaci, Marzio Pennisi, Emilio Mastriani, and
  Santo Motta.
\newblock {HAMFAST: Fast Hamming Distance Computation}.
\newblock In {\em 2009 WRI World Congress on Computer Science and Information
  Engineering}, volume~1, pages 569--572. IEEE, 2009.

\bibitem{piet2023ggfast}
Julien Piet, Dubem Nwoji, and Vern Paxson.
\newblock {GGFAST: Automating Generation of Flexible Network Traffic
  Classifiers}.
\newblock In {\em Proceedings of the ACM SIGCOMM 2023 Conference}, pages
  850--866, 2023.

\bibitem{rand1971objective}
William~M Rand.
\newblock {Objective Criteria for the Evaluation of Clustering Methods}.
\newblock {\em Journal of the American Statistical association},
  66(336):846--850, 1971.

\bibitem{rupe2019kickstarting}
Jason Rupe and Jingjie Zhu.
\newblock {Kickstarting Proactive Network Maintenance with the Proactive
  Operations Platform and Example Application}.
\newblock In {\em SCTE Cable-Tec Expo}, 2019.

\bibitem{rupe2020profile}
Jason Rupe and Jingjie Zhu.
\newblock {Profile Management Informed Proactive Network Maintenance}.
\newblock In {\em SCTE Cable-Tec Expo}, 2020.

\bibitem{stallings1993snmp}
William Stallings.
\newblock {\em {SNMP, SNMPv2, and CMIP: The practical guide to network
  management}}.
\newblock Addison-Wesley Longman Publishing Co., Inc., 1993.

\bibitem{tan2016introduction}
Pang-Ning Tan, Michael Steinbach, and Vipin Kumar.
\newblock {\em {Introduction to Data Mining}}.
\newblock Pearson Education India, 2016.

\bibitem{timm2002applied}
Neil~H Timm.
\newblock {\em {Applied Multivariate Analysis}}.
\newblock Springer, 2002.

\bibitem{volpe2021machine}
Brady Volpe and Berk Ottlik.
\newblock {Machine Learning and Proactive Network Maintenance: Transforming
  Today's Plant Operations}.
\newblock In {\em SCTE Cable-Tec Expo}, 2021.

\bibitem{walsh2009pathtrak}
Jim Walsh.
\newblock {PathTrak QAMTrak Analyzer Functionality}.
\newblock {\em Mar}, 16:1--13, 2009.

\bibitem{yu2015hierarchical}
Meichen Yu, Arjan Hillebrand, Prejaas Tewarie, Jil Meier, Bob van Dijk, Piet
  Van~Mieghem, and Cornelis~Jan Stam.
\newblock {Hierarchical Clustering in Minimum Spanning Trees}.
\newblock {\em Chaos: An Interdisciplinary Journal of Nonlinear Science},
  25(2):023107, 2015.

\bibitem{zhou2020proactive}
Lei Zhou, Robert Thompson, Robert Howald, John Chrostowski, and Daniel Rice.
\newblock {A Proactive Network Management Scheme for Mid-split Deployment}.
\newblock In {\em SCTE Cable-Tec Expo}, 2020.

\end{thebibliography}



\appendix

\section{Algorithm for Inferring Missing Data Points}
\label{appendix:infer_missing}

\begin{algorithm}[H]
\caption{Inferring Missing Data Points}
\label{alg:inter_missing}
\begin{algorithmic}[1]
\State $T_{a\_raw} = \{ t_1, t_2, ..., t_n \}$ \Comment{The timestamp array of the time series data for device $a$} 
\State $T_a \leftarrow \{t_1\}$ \Comment{The array to store both the valid timestamp and missing timestamp}
\For {$i \ $ from $2 \ $ to $n \ $}
\While {$t_i$ - $t_{i-1}$ in $T_a$ >= $L_{missing}$}
\State insert a placeholder at  $t_{i-1} + 4hrs $ in $T_a$
\EndWhile
\State insert $t_{i}$ in $T_a$
\EndFor
\State return $T_a$
\end{algorithmic}
\end{algorithm}
This algorithm will iterate each data point once. Suppose the input data length is K, and the total number of devices is M. Then the time complexity of this algorithm for each device is O(K), and running this algorithm among all the devices will cost $O(M \times K)$.

\section{Determine the Optimal Missing Threshold}
\label{appendix:missing_threshold}
The algorithm for inferring missing data points required an accurate missing threshold $L_{missing}$. 
It is challenging to find the best $L_{missing}$ because
if we set this value too aggressively, we may infer more missing data points than ground truth. And if we set this value too conservatively, we may fail to infer some missing data points. 

To overcome this challenge, we developed an algorithm to determine the optimal missing threshold. The high-level idea is, using a given $L_{missing}$, we can infer the missing data points in our \pnmdata. We then compare our inferred results with the ground truth we generated, and select the $L_{missing}$ that gives us the highest accuracy. In \sys, all the steps are automatic to cable ISPs once they input the \pnmdata and the length of the default data collection interval $L$. We will first introduce the method we use to generate the ground truth of missing data points, then describe how we use this ground truth to guide us to find the best $L_{missing}$. 

\paragraph{Generate the Ground Truth of Missing Data Points}
\begin{figure}[h]
\centering
    \includegraphics[width=.49\textwidth]{./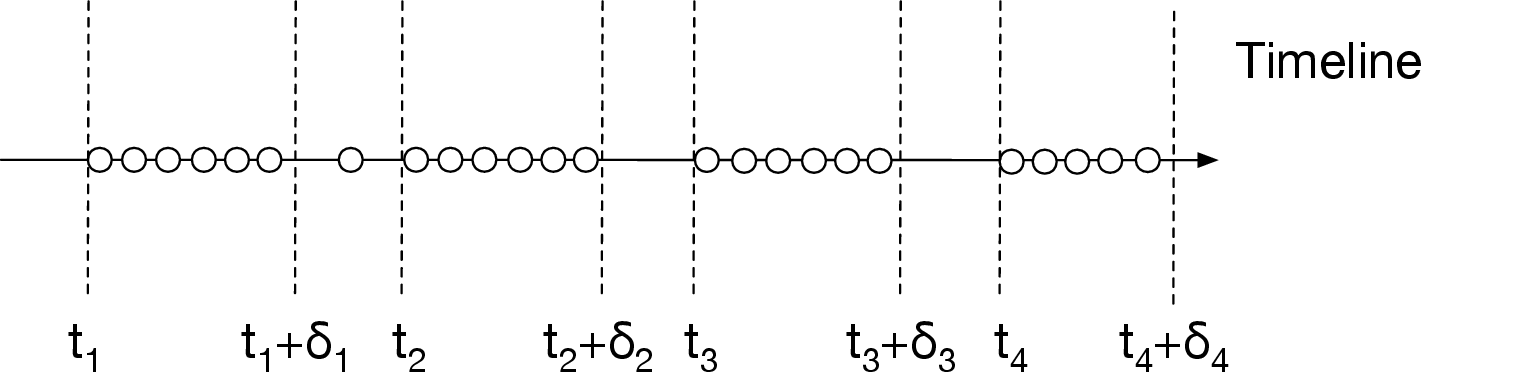}
    \caption{\small\textbf{This figure shows the data collection time for all devices in the same \FN.}}
    \label{f:data_collection_example}
\end{figure}
Figure~\ref{f:data_collection_example} shows an example of the data collection time for all devices in the same \FN in our data. \anonisp starts to collect the \pnmdata from this \FN at time $t_1$, $t_2$, $t_3$, and $t_4$, and finishes each round of data collection at time $t_1+\delta_1$, $t_2+\delta_2$, $t_3+\delta_3$, and $t_4+\delta_4$, respectively. The time interval between $t_1$, $t_2$, $t_3$, and $t_4$ are roughly the same as the default data collection interval $L$ (4 hours in this example). At each data collection period, \pnmdata will be collected from all devices in random order. Therefore, it is possible that the \pnmdata from a device was collected at $t_1$ in the first period, and was collected at $t_2+\delta_2$ in the second period. This observation provides evidence of why inferring missing data points and determining the optimal missing threshold is a challenge.

Figure~\ref{f:data_collection_example} shows the data points have a high density in each data collection period, while between two periods, the data point distribution is very sparse. This inspires us to adopt the density-based clustering algorithm to find each data collection interval. Once we observed the \pnmdata from a device are only collected at the first and the third data collection periods, then we can infer that in the second data collection period, the \pnmdata from that device are missing. Once we obtain all the data collection periods, we can use this to help us find all the missing data points in our \pnmdata. We use this result as the ground truth of missing data points.

We use DBSCAN, one of the most popular density-based clustering algorithms to help us find the data collection periods. DBSCAN requires two hyper-parameters: epsilon and min\_samples. We use grid search to find the optimum values of these two hyper-parameters. For a given epsilon and min\_sample values, we run DBSCAN to generate the data collection periods $P = \{(t_i, t_i+\delta_i)|i\in[1, n]\}$. We define the estimated error:
$$e=\sum_{i=2}^{n}(L-(t_i+\frac{\delta_i}{2}) + (t_{i-1}+\frac{\delta_{i-1}}{2}))$$
We select the epsilon and min\_sample values that minimize the estimated error and use the DBSCAN with such hyperparameters to generate the ground truth of missing data points in \pnmdata.

\paragraph{Determine the Optimal Missing Threshold}
We use the ground truth of missing data points we obtained to help us determine the optimal missing threshold $L_{missing}$. For a given $L_{missing}$ value, our infer missing algorithm (Algorithm~\ref{alg:inter_missing}) can generate an inferring result.  For two data points $x_i$ and $x_{i+1}$ collected from a device, assume the ground truth shows there are $n$ missing data points between $x_i$ and $x_{i+1}$, and Algorithm~\ref{alg:inter_missing}
reports there are $m$ missing data points. If $n=m$ and $n \neq 0$, we label those $n$ missing data points as true positive ($TP$). If $n=m=0$, we label it as a true negative ($TN$). If $n>m$, we will mark there are $m$ true positives and $n-m$ false negatives ($FN$). If $n<m$, we will count there are $n$ true positives and $m-n$ false positives ($FP$). By comparing our inferring missing results and the ground truth, we can calculate our infer accuracy $acc=\frac{TP+TN}{TP+TN+FP+FN}$. We use grid search to help us find the $L_{missing}$ that gives us the highest accuracy and use this $L_{missing}$ to infer missing data points in \pnmdata in this work.

\section{Alignment Algorithm}
\label{appendix:align}

\begin{algorithm}[H]
\caption{Pairwise Alignment}
\label{alg:align}
\begin{algorithmic}[1]
\State $T_x = \{ t_{x1}, t_{x2}, ..., t_{xn} \}$ \Comment{The timestamp array of the time series data for device $x$ including missing timestamps} 
\State $T_y = \{ t_{y1}, t_{y2}, ..., t_{ym} \}$ \Comment{The timestamp array of the time series data for device $y$ including missing timestamps} 
\State $S_{(x, y)} \leftarrow \{(t_{xi}, t_{yj})\ | \ i \in [1, n]\} \ $   where $\ t_{yj} \ = \argmin |t_{xi}-t_{yk}|, k \in [1, m]$
\State $S_{(y, x)} \leftarrow \{(t_{yi}, t_{xj})\ | \ i \in [1, m]\} \ $   where $\ t_{xj} \ = \argmin |t_{yi}-t_{xk}|, k \in [1, n]$
\State $A_{(x, y)} \leftarrow \{(t_{xi}, t_{yj})\ | \ (t_{xi}, t_{yj}) \in S_{(x, y)} \ and \ (t_{yj}, t_{xj}) \in S_{(y, x)}, \ i \in [1, n], \ j \in [1, m] \} $
\State return $A_{(x, y)}$
\end{algorithmic}
\end{algorithm}
Suppose the input data length is K, and the total number of devices is M. In Algorithm~\ref{alg:align}, for each pair-wise devices, it will iterate each device's time series data twice. So the time complexity for each pair-wise devices is O(K). Since \sys requires the alignment for all pair-wise devices, the overall time complexity will be $O(M^2 \times K)$.

Algorithm~\ref{alg:align} shows how we align the data points. Intuitively, the algorithm finds a bijection of the nearest data points for alignment. And it ensures that for any timestamp pair of any alignments with the number of timestamp pairs equal to or more than the result alignment, the result alignment must have an aligned timestamp pair with shorter time skew. Formally speaking:

\begin{theorem}
\label{theorem:align}
Given two time series data $T_x = \{ t_{x1}, t_{x2}, ..., t_{xn} \}$ and $T_y = \{ t_{y1}, t_{y2}, ..., t_{ym} \}$, Algorithm~\ref{alg:align} returns an alignment $A_{(x, y)} = \{(t_{xi}, t_{yj})\}$. Then $\forall\tilde{A}_{(x, y)} = \{(t_{\tilde{xi}}, t_{\tilde{yj}})\}$ such that $|\tilde{A}_{(x, y)}| \geq |A_{(x, y)}|$, we have $\forall (t_{\tilde{xi}}, t_{\tilde{yj}})\in\tilde{A}_{(x, y)}\exists (t_{xi}, t_{yj})\in A_{(x, y)}(|t_{xi} - t_{yj}| \leq |t_{\tilde{xi}} - t_{\tilde{yj}}|)$.
\end{theorem}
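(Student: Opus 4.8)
The plan is to reduce Theorem~\ref{theorem:align} to a single structural fact about the set $A_{(x,y)}$ produced by Algorithm~\ref{alg:align}, and then dispatch the reduction in one line. Write $g := \min_{i\in[1,n],\,j\in[1,m]}|t_{xi}-t_{yj}|$ for the smallest possible time skew between a timestamp of $T_x$ and one of $T_y$. The key claim I would isolate is: \emph{$A_{(x,y)}$ is nonempty and contains a pair $(t_{xi_0},t_{yj_0})$ with $|t_{xi_0}-t_{yj_0}|=g$.} Granting this, the theorem follows immediately and uniformly: an ``alignment'' $\tilde A_{(x,y)}$ consists of pairs $(t_{\tilde xi},t_{\tilde yj})$, each made of one timestamp of $T_x$ and one of $T_y$, so every such pair has $|t_{\tilde xi}-t_{\tilde yj}|\ge g$; taking the single witness $(t_{xi},t_{yj}):=(t_{xi_0},t_{yj_0})\in A_{(x,y)}$ then works for \emph{every} pair of $\tilde A_{(x,y)}$ at once. (In particular the hypothesis $|\tilde A_{(x,y)}|\ge|A_{(x,y)}|$ is not actually needed for the statement as written; I return to its role at the end.)

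So the substance is the structural claim. Reading the natural index correction in Algorithm~\ref{alg:align}, $A_{(x,y)}$ is precisely the set of \emph{mutually nearest} pairs: writing $\nu_Y(a)=\argmin_k|a-t_{yk}|$ and $\nu_X(b)=\argmin_k|b-t_{xk}|$ for the fixed choices the algorithm makes, $(t_{xi},t_{yj})\in A_{(x,y)}$ iff $\nu_Y(t_{xi})=t_{yj}$ and $\nu_X(t_{yj})=t_{xi}$. I would prove the claim by a monotone-descent plus pigeonhole argument. Consider the deterministic map $\Phi$ on the finite set $S_{(x,y)}$ sending $(a,b)$ to $(a',b')$ with $a'=\nu_X(b)$ and $b'=\nu_Y(a')$. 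A nearest-neighbor step never increases distance, so $|a'-b'|\le|a'-b|\le|a-b|$, i.e.\ $\Phi$ never increases the skew; and the globally closest pair, routed through $\nu_Y$, lies in $S_{(x,y)}$ with skew $g$, so the minimum skew attained on $S_{(x,y)}$ is exactly $g$. Starting from a minimum-skew element of $S_{(x,y)}$ and iterating $\Phi$, every pair in the orbit has skew $g$, and by finiteness the orbit enters a cycle $(a_1,b_1)\to\cdots\to(a_p,b_p)\to(a_1,b_1)$ of skew-$g$ pairs.

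It remains to show this cycle contains a mutually nearest pair. Observe $(a_i,b_i)\in A_{(x,y)}$ iff $\nu_X(b_i)=a_i$, i.e.\ iff $a_{i+1}=a_i$; so if no cycle pair is mutually nearest, then $a_{i+1}\ne a_i$ for every $i$, while $|a_i-b_i|=|a_{i+1}-b_i|=g$ forces (the timestamps being real numbers) $\{a_i,a_{i+1}\}=\{b_i-g,\,b_i+g\}$, hence $b_i=(a_i+a_{i+1})/2$. Feeding this relation around the cycle, I would show the first coordinates are then forced to march strictly in one direction: if, say, $a_i<a_{i+1}$, then $b_{i+1}=\nu_Y(a_{i+1})$ equals either $b_i$ (in which case $(a_{i+1},b_i)$ is mutually nearest, a contradiction) or $a_{i+1}+g$, and the latter forces $a_{i+2}=a_{i+1}+2g>a_{i+1}$; inductively $a_1<a_2<\cdots<a_p<a_1$, absurd. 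Hence the cycle contains a pair of $A_{(x,y)}$, and that pair has skew $g$, proving the claim. I expect this monotonicity-of-the-cycle step to be the main obstacle; it is also the only place where the one-dimensionality of timestamps is genuinely used, and it requires pinning down the $\argmin$ tie-breaking so that $\Phi$ is a well-defined deterministic map.

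Finally I would add a remark on the cardinality hypothesis: it earns its keep in the natural strengthening in which the witness in $A_{(x,y)}$ must be \emph{distinct} for distinct pairs of $\tilde A_{(x,y)}$ — equivalently, that the sorted multiset of skews of $A_{(x,y)}$ is dominated entry by entry by that of $\tilde A_{(x,y)}$. That version I would prove by induction on $|A_{(x,y)}|$: peel off a globally closest pair $(a^*,b^*)$ of $A_{(x,y)}$ and match it to any pair of $\tilde A_{(x,y)}$ (whose skew is $\ge g$); delete the timestamps $a^*$ and $b^*$ from $T_x$ and $T_y$; note that deleting a non-nearest timestamp changes nobody's nearest neighbor, so $A_{(x,y)}\setminus\{(a^*,b^*)\}$ is contained in the mutually-nearest set of the reduced instance; and recurse — here one must carry $|\tilde A_{(x,y)}|$ against $|A_{(x,y)}|$ through the deletions, which is exactly where the inequality $|\tilde A_{(x,y)}|\ge|A_{(x,y)}|$ is consumed.
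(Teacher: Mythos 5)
Your proof is correct, but it takes a genuinely different route from the paper's. The paper argues per pair: for each $(t_{\tilde{xi}}, t_{\tilde{yj}})\in\tilde{A}_{(x,y)}$ it does a case analysis on whether $t_{\tilde{xi}}$ or $t_{\tilde{yj}}$ already appears in some pair of $A_{(x,y)}$ (in which case that pair is a witness, since the algorithm pairs each timestamp with its nearest counterpart) and otherwise chases nearest neighbors through a finite, non-increasing-skew sequence until it hits a covered timestamp. You instead isolate a single global lemma --- the mutually-nearest set $A_{(x,y)}$ contains a pair realizing the global minimum skew $g$ --- proved by iterating the nearest-neighbor map from a minimum-skew element, entering a cycle, and ruling out a mutual-pair-free cycle by the strict one-directional march forced by $|a_i-b_i|=|a_{i+1}-b_i|=g$ on the real line; the theorem then follows in one line because every pair of any alignment has skew at least $g$. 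Your route buys a cleaner logical structure and makes explicit two things the paper leaves implicit: the hypothesis $|\tilde{A}_{(x,y)}|\geq|A_{(x,y)}|$ is never used for the statement as written, and a single fixed pair of $A_{(x,y)}$ witnesses all of $\tilde{A}_{(x,y)}$ at once (which also shows how weak the stated guarantee is). The paper's route, by contrast, produces witnesses sharing a coordinate with the competing pair whenever possible, which is closer to the operational intuition that each aligned point is matched as tightly as its competitor's, and it avoids your most delicate step, the monotonicity-around-the-cycle argument, which indeed depends on one-dimensional timestamps, a careful treatment of ties (deterministic $\argmin$), and the degenerate case $g=0$ being handled separately. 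Your concluding remark that the cardinality hypothesis earns its keep only in a strengthened, multiset-domination version is a fair and useful observation; the paper gestures at such a strengthening only in commented-out material, and your sketched peel-and-recurse induction for it would need the deletion step checked carefully (removing $a^*,b^*$ can change the nearest neighbor of timestamps whose nearest neighbor was $a^*$ or $b^*$), so treat that part as a plausible plan rather than a finished argument.
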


\begin{proof}

For any $(t_{\tilde{xi}}, t_{\tilde{yj}})\in\tilde{A}_{(x, y)}$, we discuss 3 scenarios regarding to if any timestamp in the pair is covered by a timestamp pair in $A_{(x, y)}$.

\begin{enumerate}
    \item $\exists (t_{xi}, t_{yj})\in A_{(x, y)}(t_{xi} = t_{\tilde{xi}})$.
    Because Algorithm~\ref{alg:align} always chooses the closest timestamp when pairing them, $|t_{xi} - t_{yj}| = |t_{\tilde{xi}} - t_{yj}| \leq |t_{\tilde{xi}} - t_{\tilde{yj}}|$.
    \item $\exists (t_{xi}, t_{yj})\in A_{(x, y)}(t_{yj} = t_{\tilde{yj}})$. Similarly, $|t_{xi} - t_{yj}| = |t_{xi} - t_{\tilde{yj}}| \leq |t_{\tilde{xi}} - t_{\tilde{yj}}|$.
    \item Otherwise, both $t_{\tilde{xi}}$ and $t_{\tilde{yj}}$ do not exist in any pairs of $A_{(x, y)}$. Thus, they cannot be the closest timestamp to each other. Without loss of generality, we assume $\exists \tilde{yj_2}(|t_{\tilde{xi}} - t_{\tilde{yj}}| > |t_{\tilde{xi}} - t_{\tilde{yj_2}}|)$ and $!\exists y(|t_{\tilde{xi}} - t_{\tilde{yj_2}}| > |t_{\tilde{xi}} - t_{y}|)$. Then if $\exists (t_{xi}, t_{yj})\in A_{(x, y)}(t_{yj} = t_{\tilde{yj_2}})$, we fall back to the first two cases and have $|t_{\tilde{xi}} - t_{\tilde{yj}}| \geq |t_{\tilde{xi}} - t_{\tilde{yj_2}}| \geq |t_{xi} - t_{yj}|$, where $(t_{xi}, t_{yj})\in A_{(x, y)}$ is the timestamp pair we found in the first two cases. Otherwise, we can repeat this process. Because both $T_x$ and $T_y$ are finite set, this process will terminate and we have $|t_{\tilde{xi}} - t_{\tilde{yj}}| \geq |t_{\tilde{xi}} - t_{\tilde{yj_2}}| \geq |t_{\tilde{xi}} - t_{\tilde{yj_3}}| \geq ... \geq |t_{xi} - t_{yj}|$.
\end{enumerate}

\end{proof}

This result shows that Algorithm~\ref{alg:align} is selecting timestamps as close as possible when pairing them. And thus, with high probability, Algorithm~\ref{alg:align} can align the points collected in the same collection period even with time skew.

\if 0
\section{Clustering Algorithm used by \sys}
\label{appendix:clustering}

\begin{algorithm}[H]
\caption{Clustering using Louvain Method}
\label{alg:clustering}
\begin{algorithmic}[1]
\State $G = (V, E)$
\State $V = \{v_1, v_2, ..., v_n\}$ \Comment{Each device is a vertex in the graph $G$} 
\State $E = \{(v_i, v_j)\ | \ Similarity(v_i, v_j) > thr, \ i \in [1, n], \ j \in [1, n], i\neq j \}$
\State $Parition \leftarrow Louvain(G)$
\State return $Parition$
\end{algorithmic}
\end{algorithm}

\sys uses Pearson correlation coefficient as the similarity metric for instantaneous metrics and normalized Hamming distance as the similarity metric for missing. Suppose the input data length is L, and the total number of
devices is M. Calculating the similarity between two devices will spend $O(L)$ time. Since \sys need to calculate the similarity between all devices, the overall time complexity is $O(M^2 \times L)$. And Louvain algorithm's complexity will be $O(MlogM)$~\cite{lancichinetti2009community}.

\section{Normalized Ticketing Rate for Different Clustering Algorithms}\label{appendix:tables}

\begin{table}[h]
\centering
\begin{tabular}{c c c c c}
            & \sys & DBSCAN & \makecell[c]{Single \\ Linkage}  & \makecell[c]{Complete \\ Linkage} \\ \toprule
SNR         &  5.87         & 5.41  & 5.55  & 5.28         \\ \midrule
Txpower     &  4.34         & 4.29  & 4.29  & 4.33         \\ \midrule
Missing     &  17.62        & 14.40 & 17.98 & 16.03        \\ \midrule 
Combined    &  10.00        & 8.50  & 9.07  & 7.17          \\ 
\end{tabular}
\caption{\textbf{Normalized service ticketing rate for service issues detected by different algorithms.}\label{tab:normalized_ticketing_rate_2}}
\end{table}

\begin{table}[h]
\centering
\begin{tabular}{c c c c c}
            & \sys & DBSCAN & \makecell[c]{Single \\ Linkage}  & \makecell[c]{Complete \\ Linkage} \\ \toprule
SNR         &  2.51         & 2.32  & 2.89  &  2.50        \\ \midrule
Txpower     &  1.57         & 1.57  & 1.89  &  1.67        \\ \midrule
Missing     &  1.07         & 0.77  & 1.39  &  1.47        \\ \midrule 
Combined    &  1.37         & 1.39  & 1.69  &  1.55        \\ 
\end{tabular}
\caption{\textbf{Normalized service ticketing rate for maintenance issues detected by different algorithms.}\label{tab:normalized_ticketing_rate_3}}
\end{table}

\begin{table}[h]
\centering
\begin{tabular}{c c c c c}
            & \sys & DBSCAN & \makecell[c]{Single \\ Linkage}  & \makecell[c]{Complete \\ Linkage} \\ \toprule
SNR         &  5.32         & 3.21  & 5.36  & 5.23         \\ \midrule
Txpower     &  1.84         & 1.93  & 2.05  & 2.14         \\ \midrule
Missing     &  2.51         & 4.28  & 3.41  & 2.00         \\ \midrule 
Combined    &  1.68         & 1.92  & 2.33  & 1.89          \\ 
\end{tabular}
\caption{\textbf{Normalized maintenance ticketing rate for service issues detected by different algorithms.}\label{tab:normalized_ticketing_rate_4}}
\end{table}
\fi

\if 0
\section{Normalized Ticketing Rates from Training Set}
\label{appendix:ticketing_rate}
\begin{figure}[h]
\centering
    \includegraphics[width=.45\textwidth]{./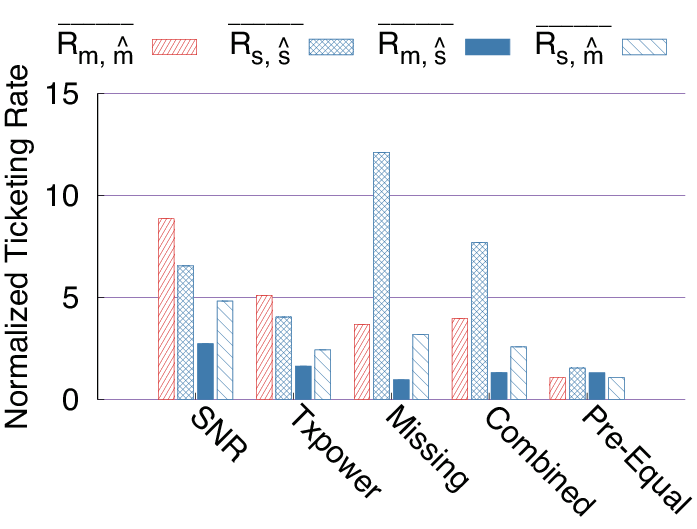}
    \caption{\small\textbf{
    The normalized ticketing rates from the training set for various
        combinations of 
        ticket type and diagnosis result, where $m$ denotes
        maintenance tickets, $s$: service tickets, $M$:
        \sys-diagnosed maintenance issues, and $S$:
        \sys-diagnosed service issues.
        \label{f:our_ticket_density_train}}}
\end{figure}
\fi
\section{Ticket Guided Clustering}
\label{sec:cluster_proof}

Let $R_{m,M}$ denote the maintenance ticketing rate in detected maintenance issues, and $R_{m,S}$ denote the maintenance ticketing rate in detected service issues, our target metric is defined as $TRR_m = \frac{R_{m,M}}{R_{m,S}}$. We now prove that maximizing $TRR_m$ could provide us with the best clustering hyper-parameter $s_f$ defined in \S~\ref{sec:clustering}, which could minimize false positives and false negatives.

For a clustering algorithm, we start with a set of devices $\mathbf{D} = \mathbf{M} \cup \mathbf{S} \cup \mathbf{U}$, where $\mathbf{M} = \{m_1, m_2, ...\}$ is the set of devices with actual maintenance issues (based on the unknown but existing ground truth), $\mathbf{S} = \{s_1, s_2, ...\}$ is the set of devices with actual service issues and $\mathbf{U} = \{u_1, u_2, ...\}$ is the set of devices without any issues. Our proof is based on the following assumptions: (1) the similarity among devices with actual maintenance issues is always the highest, compared to the similarity between any 2 devices when they do not have maintenance issues simultaneously. This assumption is made based on the observation in \S~\ref{sec:challenges}. And as a result, we would expect devices in $\mathbf{M}$ will be merged first during the hierarchical clustering process. (2) Maintenance tickets will only be reported by devices in $\mathbf{M}$, and has a uniform distribution on those devices (with the reporting ticket number expectation as $p > 0$).

\begin{theorem}
Given a set of devices $\mathbf{D} = \mathbf{M} \cup \mathbf{S} \cup \mathbf{U}$, maximizing $TRR_m$ could provide us the best clustering hyper-parameter $s_f$.
\end{theorem}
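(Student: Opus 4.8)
The plan is to do a case analysis on the value of $s_f$, organised around a structural consequence of Assumption~(1): since every within-$\mathbf{M}$ pairwise similarity strictly exceeds every similarity that involves a device of $\mathbf{S}\cup\mathbf{U}$, and an average-linkage similarity between two sub-clusters of $\mathbf{M}$ is a mean of within-$\mathbf{M}$ similarities, the merging process first assembles all of $\mathbf{M}$ into one cluster before any cluster meeting $\mathbf{S}\cup\mathbf{U}$ is ever merged. First I would make this precise by induction on the merge order (the globally most-similar pair of clusters is a pair of $\mathbf{M}$-sub-clusters as long as $\mathbf{M}$ is not yet complete). This yields three regimes as $s_f$ decreases: (a) $s_f$ large, so $\mathbf{M}$ is fragmented into several clusters, some possibly of size at most $C_{thr}$ (false negatives); (b) an intermediate ``good window'' $(\sigma',\sigma^{\star}]$ on which $\mathbf{M}$ is exactly one cluster, disjoint from every other cluster — and, since a maintenance issue affects more than $C_{thr}$ devices so that $|\mathbf{M}|>C_{thr}$, it is correctly detected as a maintenance issue; (c) $s_f$ small, so the cluster containing $\mathbf{M}$ has absorbed at least one device of $\mathbf{S}\cup\mathbf{U}$ (false positives).

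Next I would evaluate $TRR_m$ in each regime using Assumption~(2). Normalising each marked collection period to a common length $\ell$ and using that maintenance tickets are reported only by $\mathbf{M}$ with per-device expectation $p$: on the good window, \emph{all} maintenance tickets fall in detected-maintenance periods, and those periods cover exactly the $|\mathbf{M}|$ devices of $\mathbf{M}$, so $R_{m,M}=p/\ell$ (the largest value $R_{m,M}$ can take, since any false positive strictly inflates $\sum_{i,j}I^{M}_{i,j}$) and $R_{m,S}=0$ (its least value); hence $TRR_m$ is maximal there. In regime~(a), the $\mathbf{M}$-devices lying in clusters of size at most $C_{thr}$ are mis-detected as service issues, so their maintenance tickets enter $K_{m,S}$ and make $R_{m,S}>0$; thus $TRR_m$ is strictly below its good-window value, and by monotonicity of the dendrogram this count only grows as $s_f$ increases, so $TRR_m$ never recovers. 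In regime~(c), $R_{m,M}$ strictly decreases because $\sum_{i,j}I^{M}_{i,j}$ picks up the absorbed devices, so $TRR_m$ decreases as well \emph{provided} $R_{m,S}$ is bounded away from $0$.

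The hard part will be that last proviso. Under Assumption~(2) taken literally, a configuration with false positives but no false negatives still has $K_{m,S}=0$, so $R_{m,S}=0$ and $TRR_m$ formally ties the good-window maximum even though the clustering is wrong. I would close this gap in one of two equivalent ways: (i) define the chosen threshold as $s_f^{\star}=\sup\{s_f:TRR_m(s_f)\text{ is maximal}\}$ and note that the over-merging regime~(c) lies strictly below the good window, so this supremum is attained inside the good window; or (ii) break ties among the maximizers of $TRR_m$ by the secondary objective of maximizing $R_{m,M}$, which by the computation above is uniquely maximized on the good window (value $p/\ell$) since false positives strictly reduce it. Either way the selected $s_f$ realises the good-window partition, which has zero false positives and zero false negatives and therefore minimises both simultaneously — the claimed optimality. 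The only other point demanding care is the structural lemma in the first paragraph; everything else reduces to the bookkeeping of tickets and period lengths sketched above.
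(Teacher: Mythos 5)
Your proposal is correct and follows the same skeleton as the paper's argument: both exploit Assumption~(1) to conclude that the devices of $\mathbf{M}$ coalesce before any device of $\mathbf{S}\cup\mathbf{U}$ joins, and both then track how $R_{m,M}$ and $R_{m,S}$ move as the threshold sweeps past that point (the paper phrases this as two merge stages with sign conditions on $\partial TRR_m/\partial FN$ and $\partial TRR_m/\partial FP$; you phrase it as three regimes in $s_f$ with explicit values of the two rates). Where you genuinely add something is twofold. First, you prove the structural ``$\mathbf{M}$ merges first'' statement as a lemma by induction on the merge order and you explicitly thread the cluster-size threshold $C_{thr}$ through the detection step, whereas the paper simply asserts the merge order as a consequence of its assumption and never mentions $C_{thr}$ in the proof. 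Second, and more substantively, you notice the degenerate case the paper glosses over: once every device of $\mathbf{M}$ sits in a detected-maintenance cluster, $K_{m,S}=0$ and hence $R_{m,S}=0$, so in the over-merged regime $TRR_m=R_{m,M}/R_{m,S}$ is not a finite decreasing quantity as the paper's stage-two argument (``$R_{m,S}$ remains stable, $R_{m,M}$ decreases, so $TRR_m$ decreases'') tacitly presumes; your supremum convention or tie-break by $R_{m,M}$ is exactly what is needed to make the selection well defined and to rule out over-merged maximizers. One small caveat in your regime~(a): if $\mathbf{M}$ fragments into pieces all larger than $C_{thr}$, there need be no false negatives and $TRR_m$ can tie the optimum there too, but since every such tied configuration still has $FP=FN=0$ at the device level, the theorem's conclusion is unaffected; your hedging (``some possibly of size at most $C_{thr}$'') should be stated as this observation rather than as a claim that regime~(a) always strictly loses.
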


\begin{proof}

To begin with the proof, we first clarify the criteria for determining how good a clustering result is: an ideal result should maximize the Rand Index (RI) and the Adjusted Rand Index (ARI) (as defined in \S~\ref{sec:compare_choice} for evaluation). The higher those 2 metrics are, the better the clustering result is. 

In order to prove this result, we firstly proof that the best clustering hyper-parameter $s_f$ we choose can minimize the false positive ($FP$), \ie, a device without any maintenance issues is detected as with a maintenance issue, and the false negative ($FN$), \ie, a device with a maintenance issue is detected as without any maintenance issues. Then we show its equivalence to maximize RI.

We divide the entire hierarchical clustering process into 2 stages divided by the time point that all devices in $\mathbf{M}$ have been merged into maintenance clusters and then devices from $\mathbf{S}$ and $\mathbf{U}$ begin to participate into the merging, which is also the start of the second stage.

At the first stage, every time we merge a new device in $\mathbf{M}$ into a maintenance cluster would reduce $FN$. We will not incorrectly classify devices in $\mathbf{S}$ or $\mathbf{U}$ as with a maintenance issue yet and therefore $FP$ would remain stable. On the other hand, $R_{m,M}$ will remain stable because of the uniform maintenance ticket distribution, but $R_{m,S}$ decreases because we correctly identify more devices in $\mathbf{M}$. As a result, $TRR_m$ would increase at this stage. Conclusively, at this stage, we have:
$$\frac{\partial\,TRR_m}{\partial FN} < 0,~ \frac{\partial\,TRR_m}{\partial FP} = 0$$

At the second stage, we will not incorrectly classify any devices in $\mathbf{M}$ anymore and therefore $FN$ would remain stable. However, every time we merge a new device in $\mathbf{S}$ or $\mathbf{U}$ into a maintenance cluster will increase $FP$. On the other hand, $R_{m,S}$ would remain stable because all devices in $\mathbf{M}$ have already been merged into maintenance clusters. However, when we merge a new device in $\mathbf{S}$ or $\mathbf{U}$ into a maintenance cluster, because this new device will not report any maintenance tickets and $p > 0$, $R_M$, as the averaged ticketing rate, will decrease. As a result, $TRR_m$ would decrease at this stage. Conclusively, at this stage, we have:
$$\frac{\partial~TRR_m}{\partial FN} = 0,~ \frac{\partial~TRR_m}{\partial FP} < 0$$

Combining these 2 stages, we have:
$$\frac{\partial\,TRR_m}{\partial FN} \leq 0,~ \frac{\partial\,TRR_m}{\partial FP} \leq 0$$

which indicates that when $TRR_m$ is maximized, the clustering result of the corresponding $s_f$ minimizes both $FP$ and $FN$.

Next, RI is defined with true positive ($TP$) and true negative ($TN$) together with $FN$ and $FP$:

$$RI = \frac{TP+TN}{TP+TN+FP+FN} = 1 - \frac{FP+FN}{TP+TN+FP+FN}$$

Because $TP+TN+FP+FN$ is the total number of devices and is a constant, we can conclude that RI is maximized when $FP+FN$ is minimized, which in turn when $TRR_m$ is maximized as how we choose the best $s_f$. ARI essentially is a normalized version of RI ($ARI = \frac{RI-\mathbb{E}(RI)}{1-\mathbb{E}(RI)}$) and shares the same characteristics and thus is also maximized with the best $s_f$.

\end{proof}

\newpage

 


\vspace{11pt}


\vfill

\end{document}